\title{Fast Sketch-based Recovery of Correlation Outliers\footnote{The
    work of GC is supported by European Research Council grant ERC-2014-CoG 647557 and a Royal Society Wolfson Research Merit Award; JD is supported by a Microsoft Research PhD Scholarship
    (MRL 2014-038).}}
\author{
  Graham Cormode\\
  \texttt{g.cormode@warwick.ac.uk}
  \and
  Jacques Dark\\
  \texttt{j.dark@warwick.ac.uk}
  }
\theoremstyle{plain}
\newtheorem{theorem}{Theorem}
\newtheorem{proposition}[theorem]{Lemma}
\newtheorem{corollary}[theorem]{Corollary}
\theoremstyle{definition}
\newtheorem{definition}[theorem]{Definition}
\newenvironment{informal}
  {\begin{proof}[Informal Proof]}
  {\end{proof}}
\newcommand{\E}[1]{\mathbb{E}\left[#1\right]}
\newcommand{\Prob}[1]{\mathbb{P}\left[#1\right]}
\newcommand{\Var}[1]{\text{Var}\left[#1\right]}
\newcommand{\RPS}{\textsf{RPS}\xspace}
\newcommand{\CPS}{\textsf{CPS}\xspace}
\newcommand{\TS}{\textsf{TS}\xspace}
\newcommand{\para}[1]{\medskip \noindent {\bf #1}}
\begin{document}

\maketitle

\begin{abstract}\label{sec:abstract}
Many data sources can be interpreted as time-series, and a key problem is to identify which pairs out of a large collection of signals are highly correlated. We expect that there will be few, large, interesting correlations, while most signal pairs do not have any strong correlation. We abstract this as the problem of identifying the highly correlated pairs in a collection of $n$ mostly pairwise uncorrelated random variables, where observations of the variables arrives as a stream. Dimensionality reduction can remove dependence on the number of observations, but further techniques are required to tame the quadratic (in $n$) cost of a search through all possible pairs. 
  
We develop a new algorithm for rapidly finding large correlations based on sketch techniques with an added twist: we quickly generate sketches of random combinations of signals, and use these in concert with ideas from coding theory to decode the identity of correlated pairs. We prove correctness and compare performance and effectiveness with the best LSH (locality sensitive hashing) based approach.
\end{abstract}

\section{Introduction}\label{sec:intro}
One of the most basic tasks in data analysis is to identify correlations between data sources, modeled as random variables. Discovered correlations are used to remove unnecessary features, to build predictive models,
and to identify unexpected behaviors and dependencies. In this paper, we consider the most common measure of correlation: the Pearson product-moment correlation coefficient, which describes the linear relationship between a pair of random variables. This measure is simple to state and interpret: it is computed as the (sample) covariance of the two variables, divided by the product of the corresponding standard deviations. It ranges from $-1$ (strong negative correlation) through $0$ (no correlation) to $+1$ (strong positive correlation). Hence, we are typically interested only in attribute pairs with correlation close to $1$ in magnitude.

For large numbers of variables, it can quickly become infeasible to
compute the correlations of all of the quadratically many
pairs. However, our observation is that most correlations are
uninteresting: for many kinds of data, we expect that most pairs of
variables would \emph{not} display any (strong) correlation.
For example, if we consider the activity profiles of users of a large
web service, then we do not expect many pairs to be strongly
correlated (there may be weak correlations due to similar time-of-day
and day-of-week behavior) - any strong correlation between a pair
would be unusual, indicating potentially
nefarious activity worthy of further investigation.
We model this by assuming that the number of correlated pairs is
asymptotically smaller than the quadratically many possible pairs. 

With this in mind, we can ask the following questions: given a stream of observation data, can we identify all correlation outliers (unusually large correlation coefficients, defined by being greater in magnitude than some parameter $\phi$) with query time cost sub-quadratic in the number of variables and sub-linear in the number of observations?

This can be done using a combination of a Fast Johnson-Lindenstrauss Transform (FJLT, to compress the rows of the input matrix) and Locality Sensitive Hashing (LSH, to efficiently find the outlier pairs). However, for small $\phi$, the query time of this strategy looks like $n^{2-\Theta(\phi)}$, even as we shrink all the non-outlier correlations down to $0$. Can we improve the query time in this case?
This paper answers this question in the positive, by describing an algorithm which takes sketches of the rows and uses fast matrix multiplication to quickly transform them into an approximation of a sketch of the correlation matrix. We then remove the $1$'s along the diagonal, and use a heavy hitters recovery technique to pull out the outliers. For fixed Frobenius
norm\footnote{For matrix $\mathbf{A}$, the Frobenius norm is
  $\textstyle \lVert \mathbf{A} \rVert_F = (\sum\limits_{i,j}\mathbf{A}_{i,j}^2)^\frac{1}{2}$.} of the non-outlier non-diagonal correlations, this query process can be performed in time $\tilde{O}(\phi^{-2}n^{5/3})$, assymptotically better than LSH for $\phi < 1/3$. However, this comes at the cost of requiring much larger sketches of the input matrix rows.

\section{Preliminaries}\label{sec:prelims}
\subsection{Models}\label{sec:models}
We treat the observation data as defining an $n \times p$ matrix of
reals, $\mathbf{M}$.
Here, $n$ denotes the number of attributes, while $p$ indexes the
different observations.
Hence, each of the $p$ columns represents an independent observation
of some $n$-dimensional random variable.
We label the columns (observations) as $\mathbf{x^{(i)}}$ for $i \in [p]$.
For this data, we can apply standard definitions of covariance and
correlation.

\begin{definition}\label{def:estimators}
Recall that:
\begin{itemize}
\item
  The \emph{sample mean} is given by $\mathbf{\overline{x}} = \frac{1}{p} \sum_{i=1}^p \mathbf{x^{(i)}}$.
\item
  The \emph{sample covariance} is given by
  $\mathbf{V} = \frac{1}{p-1}(\mathbf{M} - \mathbf{\overline{x}} \mathbf{e}^T) (\mathbf{M} - \mathbf{\overline{x}} \mathbf{e}^T)^T \text{,}$
  where $\mathbf{e}$ is the $p$-dimensional vector with entries all ones.
\item
  The \emph{sample correlation} is given by $\mathbf{C} = \mathbf{\Sigma}^{-\frac{1}{2}} \mathbf{V} \mathbf{\Sigma}^{-\frac{1}{2}}$, where $\mathbf{\Sigma}$ is the diagonal matrix consisting of the diagonal entries of $\mathbf{V}$.
\end{itemize}
\end{definition}

Essentially, the covariance is found by shifting the rows of
$\mathbf{M}$ to have mean $0$ and then taking inner products between
them normalized by a factor of $\frac{1}{p-1}$.
The definition of the correlations is similar, but further normalized by diving out the standard deviations.

It will be useful for our analysis to have notations for the rows of $\mathbf{M}$ with the shift normalization applied.

\begin{definition}\label{def:standardized}
  For each row vector $\mathbf{y^{(i)}}$:
  \begin{itemize}
  \item
    Let the standardized row vector $\mathbf{\hat{y}^{(i)}}$ be given by $\mathbf{\hat{y}^{(i)}} = \frac{\mathbf{\hat{y}^{(i)}} - \mathbf{\bar{x}_ie}^T}{\| \mathbf{\hat  {y}^{(i)}} - \mathbf{\bar{x}_ie}^T\|_2}$.
  \end{itemize}
\end{definition}

The observation matrix $\mathbf{M}$ is input as a stream of $m$
updates $\langle u_1, u_2, \cdots u_m \rangle$ arriving one at a
time.
Starting from the zero matrix $\mathbf{M^{(0)}} = \mathbf{0}$, each update $u_s$ describes a change to be made to $\mathbf{M^{(s-1)}}$ in order to determine $\mathbf{M^{(s)}}$. By the end of the stream, we have $\mathbf{M^{(m)}} = \mathbf{M}$. The format of the updates depends on the exact choice of stream model---we will consider three variants: row-wise permutation, column-wise permutation, and turnstile.

\para{Row-Wise Permutation Stream (\RPS).}
In this model, the updates are simply a list of the entries of
$\mathbf{M}$, one row at a time. With each step from $\mathbf{M^{(s-1)}}$ to
$\mathbf{M^{(s)}}$, one entry is changed from $0$ to its final
value.
Entries in the same row arrive contiguously, so each row is
filled out one after the other.
Without loss of generality, we can assume that rows arrive in index
order, so that
$\mathbf{M}_{i,j} \gets u_{ip+j} $
%
%The order of the arrival of the rows
%themselves, and the order of arrival of entries within rows are
%arbitrary.
Since each entry is set exactly once, the stream has length $m =
np$.
The arrival of each new row corresponds to adding a new
attribute to the data set.

\para{Column-Wise Permutation Stream (\CPS).}
This model works the same as the row-wise version, but with entries
arriving as contiguous columns.
Again, $m = np$ but now
$\mathbf{M}_{i,j} \gets u_{jn + i} $.
The arrival of a new column corresponds to adding a new observation
(e.g. from a new time step).

\para{Turnstile Stream (\TS).}
The turnstile model is the most general that we consider. Here updates
are of the form $u_t = (\alpha, i, j)$ indicating that the $(i,
j)$\textsuperscript{th} entry should be incremented by $\alpha \in
\mathbb{R}$.
That is, $\mathbf{M}^{(s)}_{i,j} \gets \mathbf{M}^{(s-1)}_{i,j} + \alpha$,
while all other entries remain the same.
Changes happen in any order, and entries can change any number of
times as long as the correct state is reached by the end of the
stream.
Hence, the stream length $m$ is arbitrary.

Both \RPS and \CPS are then special cases of this model. \TS
represents the situation where each of the observed values needs to be
aggregated from a variety of sources.
For example: suppose the entries in our observation matrix represent the number of requests for a specific resource (indexed by rows) at a specific site (indexed by columns) in a distributed system. Then, the number of requests at each node will need to be accumulated to produce the actual observation data.

\subsection{Problem Statement}\label{sec:problem}

In what we term the \emph{correlation outliers} problem, we are given
a stream describing $\mathbf{M}$ (according to one of the three
models), and three parameters: $k$, $\phi$, and $R$.
We make use of the following concepts:

\begin{definition}\label{def:large}
For the sample correlation matrix $\mathbf{C}$ (of $\mathbf{M}$):
\begin{itemize}
\item
  Let $\textsc{Large}_\phi \subset [n]^2$ refer to the set of index pairs of off-diagonal entries of $\mathbf{C}$ which have magnitude at least $\phi$.
\item
  Let $\mathbf{C}_{-k}$ refer to the matrix obtained by taking $\mathbf{C}$, removing all the diagonal entries, and removing the $k$ largest magnitude off-diagonal entries (replacing them with $0$'s).
\end{itemize}
\end{definition}

The problem is then to maintain a summay of the stream so that
all index pairs contained in $\textsc{Large}_\phi$ can be retrieved
with high probability ($o(\frac1n)$ chance of failure), provided
 that $|\textsc{Large}_\phi| \leq k$ and $\lVert \mathbf{C}_{-k}
 \rVert_F \leq R$.
Since the full input can be trivially maintained in $O(np)$ space, we
seek solutions with space cost that is $o(np)$.
Further, the summary should be quick to update (taking polylogarithmic time) and, at the end of the stream, the query routine should run in time $o(n^2)$. %and return a list of index pairs which contains all of $\textsc{Large}_\phi$ with high probability ($o(\frac{1}{n})$ chance of failure).
We argue that the assumption that $k$, the number of highly correlated
pairs, is $o(n^2)$ is a reasonable one: otherwise, simply reporting
all the correlated pairs would take quadratic time, and naive exhausitve
solutions would suffice.
A similar assumption is made in prior work considering Boolean vectors~\cite{BooleanCorrelation,FasterBooleanCorrelation}.

\subsection{Related Work}\label{sec:related}
% Since correlation coefficients are simply centered and normalized inner products (recall definition~\ref{def:estimators} - we centered to find covariance and then normalized for correlation), the correlation outliers problem can be related to two important problems: approximate matrix multiplication and approximate near neighbours on the unit Euclidean sphere.

% \begin{itemize}
%   \item
%   The matrix of all correlation coefficients can also be expressed as $\mathbf{\tilde{M}}\mathbf{\tilde{M}}^T$, where $\mathbf{\tilde{M}}$ is the matrix of centered and normalized rows of the observation matrix $\mathbf{M}$. This is a simple consequence of the fact that the normalization step commutes with the matrix product step in definition~\ref{def:estimators}. Hence, the correlation outliers problem can be seen as the problem of finding large entries in a matrix product, with the added complication of centering and normalizing.

%   \item
%   Since a large positive correlation is equivalent to small Euclidean distance between rows of $\mathbf{\tilde{M}}$, we can find large positive correlations by looking for pairs that are aNN (approximate near neighbours) on the unit Euclidean sphere. Adding a negated duplicate of each row also allows us to find large negative correlations. However, in aNN it is assumed you have full access to the input, whereas we wish to work with a stream.
% \end{itemize}

\para{Locality Sensitive Hashing.}
Asking for high correlation is equivalent to looking for small Euclidean distance between the standardized (normalized and centered) row vectors. A correlation of $\phi$ corresponds to a distance on the sphere of $\sqrt{2 - 2\phi}$. Hence, this problem can be solved using Euclidean Locality Sensitive Hashing (LSH). Negative correlation outlier pairs can be found by simply considering every row and its negation.

Assuming outlier correlations are greater than $\phi_0$, and non-outlier correlations are smaller than $\phi_1$. We can use a Fast Johnson-Lindenstrauss Transformation to compress input rows to length $O(\epsilon^{-2}\log{n})$, distorting the pairwise distances by at most $(1 \pm \epsilon)$. Then we can do LSH with $c^2 = \left(\frac{1-\epsilon}{1+\epsilon}\right)^2 \left(\frac{1-\phi_0}{1-\phi_1}\right)$.

The best known Euclidean LSH algorithms have $\rho = \frac{1}{c^2} + o(1)$ (data independent, \cite{LSH}) and $\rho = \frac{1}{2c^2 - 1} + o(1)$ (data dependent, \cite{LSH2}). This gives us a time and space cost of $O(n^{2-\Theta(\phi_0)})$ for fixed $\phi_0$, even if $\epsilon$ goes to $1$ and $\phi_1$ goes to 0.

\noindent{\bf Compressed Matrix Multiplication.}
Pagh~\cite{OuterProducts} considered the problem of efficiently
computing sparse or approximate matrix products. The key idea is that
by choosing a particular structure for the sketching functions, it is
possible to quickly compute a sketch of the outer product
$\mathbf{xy}^T$, from sketches of $\mathbf{x}$ and $\mathbf{y}$,
through the use of FFTs (in time $O(b\log{b})$ for length $b$
sketches).
Since a matrix product $\mathbf{AB}^T$ can be decomposed into a sum of
such outer products between corresponding columns, this allows for
efficient computation of matrix products from sketches of columns.

As the algorithm only requires access to matched columns of
$\mathbf{A}$ and $\mathbf{B}$ one at a time, in the special case of
$\mathbf{A} = \mathbf{B}$ this approach can be used in the \CPS model
to build a sketch of $\mathbf{AA}^T$.
In particular, we can build a sketch of the covariance matrix $\mathbf{V}$ in this streaming model, from input observation matrix $\mathbf{M}$, with update time cost $O(b\log{b})$ ($O(1)$ amortized, since $n$ dominates $b\log{b}$) and space usage $O(b)$.
To recover dominant entries from these sketches, Pagh describes an
approach (building on Gilbert et al.~\cite{SparseRecovery}) that uses
$O(\log^2{n})$ sketches of sub-matrices of $\mathbf{AB}$, along with
error correcting codes, to discover the identity of a small number of
entries which dominate the Frobenius norm of the product, with high
probability.
This process runs in $O(b\log^2{n})$ time and space.
Putting these pieces together provides a solution to a
%Applying this to covariance sketches can solve a
{\em covariance outliers} version of our problem in the \CPS model.

Unfortunately, this approach cannot be adapted directly to the {\em
  correlation outliers} problem.
Large correlations between low variance signals would be drowned out
by the contribution from high variance signals that are much more
weakly correlated. To apply this technique, we would need to record
the whole of $\mathbf{M}$ (perhaps feasible for small $np$), or
perform two passes over the stream---using the first to determine the
variances, and then using the covariance solution on the rescaled
inputs with the second pass.
Instead, we will adapt the recovery process to work on different kinds of sketches.

\para{Boolean Vectors.}
Karppa et al.~\cite{FasterBooleanCorrelation} (improving on work by Valiant~\cite{BooleanCorrelation}) considered the problem of identifying a small number of highly correlated pairs of \emph {Boolean} vectors (entries are $\pm1$) from a collection of vectors which mostly have low pairwise correlation. Several tricks are used to beat LSH for small $\phi$:

\begin{itemize}
\item
  Calculate tensor powers of each signal vector to amplify the relative difference between the small and large correlations.
\item
  Sub-sample from the tensor power vectors to reduce dependence on the blown up dimension of these vectors. This is possible because Boolean vectors have a uniform distribution of weight among their entries.
\item
  Sum together groups of the sub-sampled vectors (pre-multiplied by random signs) to reduce dependence on the number of vectors. This is possible because only a few of the pairwise correlations are large.
\item
  Use fast matrix multiplication to quickly compute the inner products between pairs of groups. These are then inspected for unusually large entries, which indicates the presence of a large pair in the group.
\end{itemize}

The first two points depend strongly on the vectors being Boolean and
hence do \emph{not} apply in our more general real-valued setting.
Note however, that our approach shares the use of fast matrix
multiplication and the sketch-like idea of signed group
aggregation for increasing speed.

\subsection{Our Contributions}\label{sec:contributions}
We describe an algorithm which answers the correlation outliers
problem in the turnstile streaming model.
We analyze its space and time costs, and show that they meet the
desiderata above.
Our algorithm stores a separate Fast AMS sketch of each row of
$\mathbf{M}$ (described in Section~\ref{sec:sketch}).
Comparing these directly would still take time $\Theta(n^2)$ to
perform an all-pairs comparison.
Instead, we achieve an improved query time with the following three ideas:
\begin{itemize}
\item
  By randomly assigning variables into $\Pi$ groups, and linearly
  combining the (sketched) information of all variables in the same
  group we can go from having to consider $n^2$ pairs of variables to $\Pi^2$ pairs of groups. This can be seen as a second level of sketching.
\item
  Error correcting codes are composed with the grouping step (including/excluding variables from groups based on code bits) in a way that allows us to recover the identities of large entries in a particular group pair using the decoder.
\item
  Fast matrix multiplication algorithms allow us to quickly generate batches of sketch estimates of inner products. This speeds up the evaluation of the inner products between pairs of groups. Then checking whether the results of these computations exceeds a given threshold produces the strings of bits for the decoder.
\end{itemize}

Our main result is stated in full in Theorem~\ref{thm:main}.
As an example, setting an internal parameter $\theta = 2/3$ and for fixed $k, R$,
we obtain subquadratic space and time
costs summarized in the below table.

\begin{center}
  \begin{tabular}{| c | c | c | c | c |}
    \hline
    \textbf{Technique} & \textbf{Models} & \textbf{Sketch Size} & \textbf{Query Space} & \textbf{Query Time}\\
    \hline
    Fast AMS Sketches & All & $\tilde{O}(\phi^{-2}n)$ & $\tilde{O}(\phi^{-2}n)$ & $\tilde{O}(\phi^{-2}n^2)$\\
    FJLT + LSH & All & $\tilde{O}(n)$ & $\tilde{O}(n^{2-\Theta(\phi)})$ & $\tilde{O}(n^{2-\Theta(\phi)})$\\
    Our Approach & All & $\tilde{O}(\phi^{-2}n^{5/3})$ & $\tilde{O}(\phi^{-2}n^{5/3})$ & $\tilde{O}(\phi^{-2}n^{5/3})$\\
    \hline
  \end{tabular}
\end{center}

This space usage is $o(np)$ for $p \in \Omega(n^{2/3 + \epsilon})$.

\subsection{Sketches of Vectors}
\label{sec:sketch}
Our results make use of \emph{sketches} of vectors. These can be thought of as random projections from the original high-dimensional space down to a lower dimensional space, such that geometric properties of the vectors are (approximately) preserved. In particular, given vectors $\mathbf{x}$ and $\mathbf{y}$, sketches exist that can estimate:
%
%\begin{enumerate}
%\item
\begin{align}
&\text{
  Squared Euclidean length $\lVert \mathbf{x} \rVert_2^2$ up to error
  $\epsilon \lVert \mathbf{x} \rVert_2^2$.}\label{eq:euclid} \\
%\item
&\text{
  Inner product $\langle \mathbf{x} , \mathbf{y} \rangle$ up to error $\epsilon \lVert \mathbf{x} \rVert_2 \lVert \mathbf{y} \rVert_2$.}
\label{eq:inner}
\end{align}
%\end{enumerate}

Many results for such sketches are known, from the earliest (non-constructive) results based on the Johnson-Lindenstrauss lemma \cite{Johnson:Lindenstrauss:84}, the tug-of-war sketches due to Alon, Matthias, Szegedy and Gibbons~\cite{Alon:Matias:Szegedy:96,Alon:Gibbons:Matias:Szegedy:99}, and several more~\cite{Achlioptas:01,Li:Hastie:Church:07,Kane:Nelson:14}. For concreteness, we will adopt the so-called (fast) AMS sketches (explained in~\cite{Cormode:11}). These create a sketch of size $O(\epsilon^{-2} \log{1/\delta} )$ so that any query obtains the above claimed $\epsilon$ guarantee with probability at least $1-\delta$, where the probability is over the random choices used to determine the random projection.

The AMS sketching procedure maps (linearly and randomly) the space of $p$-dimensional vectors to the space of $d \times b$ matrices. Each row of the output sketch is obtained by pre-multiplying the input vector by a diagonal matrix whose entries are Rademacher (uniformly random $\pm 1$), and then pre-multiplying by a $b \times p$ sparse matrix where each column has a single $1$, with $0$ everywhere else\footnote{The construction does not require the entries to be chosen fully independently at random, so it is common to describe the sketch transformations in terms of hash functions drawn from limited independence families. This allows the transform to be stored in polylogarithmic space.}. This process generates one row of the sketch, and is repeated independently $d$ times to generate all rows. The stated $(\epsilon$, $\delta)$ guarantee can be achieved for $d \in \Theta(\log{1/\delta})$ and $b \in \Theta(\epsilon^{-2})$. As they are a sparse linear transformation of their input, any addition to an entry in the sketched vector can be applied to the sketch in time $O(d) = O(\log 1/\delta)$.

\begin{definition}\label{def:ams}
Let:
\begin{itemize}
\item
  $\textsc{AMS}_{\epsilon, \delta}$ refer to a distribution of random
  linear maps corresponding to fast AMS sketches with the stated
  $(\epsilon, \delta)$ norm and inner product approximation guarantees
  (\eqref{eq:euclid} and \eqref{eq:inner}).
\item
  \emph{$(\epsilon, \delta)$-sketch transformation} $\mathbf{S}$ be a linear map drawn from $\text{AMS}_{\epsilon, \delta}$.
\item
  The symbol $\odot$ represent the binary operation of performing the inner product query between two sketches. So
  $\mathbf{S}(\mathbf{x}) \odot \mathbf{S}(\mathbf{y}) \approx \langle \mathbf{x}, \mathbf{y} \rangle \text{, and }\, \mathbf{S}(\mathbf{x}) \odot \mathbf{S}(\mathbf{x}) \approx \|\mathbf{x}\|_2^2 \text{.}$
\end{itemize}
\end{definition}

One application of sketches is to estimate the value of a particular index in a vector. This can be achieved as a special case of an inner product query: we use the sketch to estimate $\langle \mathbf{x} , \mathbf{e}_i \rangle$, where $\mathbf{e}_i$ is the vector that is $1$ at location $i$ and $0$ elsewhere. The guarantee ensures that we obtain an estimate with error at most $\epsilon \lVert \mathbf{x} \rVert_2$. This use of sketches is referred to as a Count sketch~\cite{Charikar:Chen:Farach-Colton:02}.

\section{Algorithm and Analysis}\label{sec:theory}
\subsection{Algorithm Overview}\label{sec:overview}
Our algorithm works in the most general stream model we considered, the turnstile model. At a high level, our algorithm consists of:

\begin{itemize}
\item
  An \emph{initialization} procedure to set up the sketch data structure.
\item
  An \emph{update} procedure to process updates from the stream.
\item
  A \emph{query} procedure to recover the suspected elements of $\textsc{Large}_{\phi, k}$.
\end{itemize}

Our sketch structure is built on top of a collection of AMS sketches with standard initialization and update procedures, plus some additional variables to keep running totals. We will briefly review these procedures in Section \ref{sec:rowsketch}, as well as discussing some basic properties and routines required for the query algorithm.

The query process itself is based on two main ideas. First, we can
take linear combinations of AMS sketches and then perform inner
product queries between them in order to estimate certain kinds of
linear combinations of entries of $\mathbf{C}$. Further, we can
perform batches of such queries quickly using fast matrix
multiplication. And secondly, we can utilize error correcting codes to
identify the large magnitude entries in these linear combinations of
entries
%(as Pagh did, see Section \ref{sec:related})
even with the error introduce by the AMS sketches.

Rather than fast matrix multiplication between combinations of rows, we could have hoped to employ Pagh's compressed matrix multiplication for our second layer of sketching. However, to produce a $w$-bucket sketch would take time $\frac{w\log{w}}{\epsilon^2}$ for each row of the AMS sketches. Then to control the variance of the output buckets, we would need $w = n^2\epsilon^2$, giving a time cost of $\Omega(n^2)$ to build the secondary sketch.

The outline and the discussion of the query algorithm is therefore
broken up into four parts. In Section \ref{sec:cartesian} we describe
a ``Cartesian sketch'' which compresses a matrix by applying a pair of
independent transformations (each akin to the Count sketch), one row-wise and one column-wise. Then, in Section \ref{sec:recovery} we show that we can use the error correcting code technique to recover large entries from Cartesian sketches. Further, we show that the recovery technique is robust to additional sources of noise per entry of the sketch. Next, in Section \ref{sec:approximation} we show how the AMS sketches in our structure can be used to build good enough approximations of the Cartesian sketches to satisfy the noise limits. Finally, in Section \ref{sec:analysis} we analyze the overall space and time costs and discuss how to amplify the probability of success.
For brevity, full proofs are deferred to the Appendix, and we present
informal proofs in the main body to convey the high level ideas. 

\subsection{Row Sketching}\label{sec:rowsketch}
For our data structure we will keep an AMS sketch of each row of the observation matrix along with a running total. The choice of sketch parameters $(\epsilon, \delta)$ will be made in the final analysis in Section~\ref{sec:analysis}.

To initialize the structures, we randomly pick an $(\epsilon,
\delta)$-sketch transformation $S$, and initialize
$n$ sketches $\mathbf{r^{(i)}}$ to all zeros.
We also create $n$ counters $t^{(i)}$, initialized to zero.
%Algorithm \ref{alg:initialization} describes the set up process,
%while
Algorithm \ref{alg:update} shows how to apply a received update in the
\TS model.
We use $\mathbf{e_j}$ to indicate the length $p$-vector consisting of
a $1$ in entry $j$ and $0$ everywhere else.
The update simply updates the $i$th sketch with index $j$, and updates
the corresponding sum of weights, $t^{(i)}$.
Let $\mathbf{y^{(i)}}$ to refer to the $i$\textsuperscript{th} row of $\mathbf{M}$ for $i \in [n]$. By following these procedures we will have $\mathbf{r^{(i)}} = \mathbf{S}(\mathbf{y^{(i)}})$ and $t^{(i)} = \sum_{j \in [p]} \mathbf{y^{(i)}}_j$ at the conclusion of the stream.

%\begin{algorithm}
%  \DontPrintSemicolon
% % \Begin{
%    Randomly pick $(\epsilon, \delta)$-sketch transformation $\mathbf{S}$\;
%    \For{$i \in [n]$}{
%      Initialize $\mathbf{r^{(i)}}$ as zero sketch $\mathbf{S}(\mathbf{0}) = \mathbf{0}$\;
%      Initialize $t^{(i)}$ as $0$\;
%%    }
%  }
%  \caption{\textsc{Initialization}\label{alg:initialization}}
%\end{algorithm}

\begin{figure}[t]
  \begin{minipage}{2.75in}
   \begin{algorithm}[H]
  \DontPrintSemicolon
  \KwIn{TS model update $u_s = (\alpha, i, j)$}
  \BlankLine
%  \Begin{
    $\mathbf{r^{(i)}} \gets \mathbf{r^{(i)}} + \alpha \cdot \mathbf{S}(\mathbf{e_j})$\;
    $t^{(i)} \gets t^{(i)} + \alpha$\;
%  }
  \caption{\textsc{Update}\label{alg:update}}
  \end{algorithm}
  \end{minipage}
  \hfill
  \begin{minipage}{2.5in}
\begin{algorithm}[H]
  \DontPrintSemicolon
%%  \Begin{
    \For{$i \in [n]$}{
      $\mathbf{r^{(i)}} \gets \mathbf{r^{(i)}} - (t^{(i)}/p) \cdot \mathbf{S}(\mathbf{e})$\;
%      Compute $z^{(i)}$, result of query $$\;
      $\mathbf{r^{(i)}} \gets (\mathbf{r^{(i)}} \odot \mathbf{r^{(i)}})^{-1/2} \cdot \mathbf{r^{(i)}}$\;
    }
%%  }
  \caption{\textsc{Standardize}\label{alg:standardize}}
\end{algorithm}
  \end{minipage}
\end{figure}

An important operation we will need to be able to perform on these row sketches is to \emph{standardize} them. Recalling $\mathbf{\overline{x}}$ and $\mathbf{e}$ from Definition~\ref{def:estimators}, we define:
\begin{definition}
  For a given row vector $\mathbf{y^{(i)}}$, the \emph{standardized vector} $\mathbf{\hat{y}^{(i)}}$ is given by:
  \[ \mathbf{\hat{y}^{(i)}} = (\mathbf{y^{(i)}} -
  \mathbf{\overline{x}}_i \mathbf{e}^T) \big/ \|\mathbf{y^{(i)}} - \mathbf{\overline{x}}_i \mathbf{e}^T\|_2 \text{.} \]
  If we have a sketch $\mathbf{S}(\mathbf{y^{(i)}})$ we will refer to $\mathbf{S}(\mathbf{\hat{y}^{(i)}})$ as the \emph{standardized sketch}.
\end{definition}

In the \RPS and \CPS models we could keep track of the running sums of $\alpha^2$ for each row, allowing us to compute the exact rescaling factor required to standardize the sketches. However, in the more general \TS setting, the best we can do is an approximation.
Algorithm~\ref{alg:standardize} describes the procedure for computing the approximately standardized sketches.

Initially, it may appear that to perform this standardization at query
time, we need to spend $\Omega(pd)$ time building the sketch
$\mathbf{S}(\mathbf{e})$.
However, we can amortize this cost during the update phase.
As long as at least $p$ entries of the final $\mathbf{M}$ are
non-zero, then we can build up $\mathbf{S}(\mathbf{e})$ one entry per
update by using a single counter to track which entries have been
added.
In the atypical case that $\mathbf{M}$ is extremely sparse, we will need to add $O(pd)$ to the query time to complete the construction of this sketch.

\begin{proposition}\label{prop:standardized}
After performing the \textsc{standardize} routine, the inner product query between sketches $\mathbf{r^{(i)}}$ and $\mathbf{r^{(j)}}$ produces an estimate of $\mathbf{C}_{i,j}$ having $4\epsilon$ additive error with probability at least $1 - 3\delta$, for $\epsilon < 1/2$.
\end{proposition}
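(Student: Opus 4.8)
The plan is to track the errors introduced at each of the three approximation steps — the approximate mean subtraction, the approximate normalization, and finally the inner product query itself — and to combine their failure probabilities by a union bound and their error magnitudes additively. Throughout, write $\mathbf{z^{(i)}} = \mathbf{y^{(i)}} - \overline{\mathbf{x}}_i \mathbf{e}^T$ for the exactly centered row, so the true correlation is $\mathbf{C}_{i,j} = \langle \mathbf{z^{(i)}}, \mathbf{z^{(j)}}\rangle / (\|\mathbf{z^{(i)}}\|_2 \|\mathbf{z^{(j)}}\|_2)$, and note that $\overline{\mathbf{x}}_i = t^{(i)}/p$ is computed exactly by the update procedure, so the vector the sketch represents after line~2 of \textsc{Standardize} is exactly $\mathbf{S}(\mathbf{z^{(i)}})$ (here I use linearity of $\mathbf{S}$ and $\mathbf{S}(\mathbf{e})$; the only subtlety is that the centered row is what we wanted, so in fact there is no mean-subtraction error — the error enters only through the norm estimate and the final query).

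First I would bound the normalization error. Line~3 rescales by $(\mathbf{S}(\mathbf{z^{(i)}}) \odot \mathbf{S}(\mathbf{z^{(i)}}))^{-1/2}$ rather than by $\|\mathbf{z^{(i)}}\|_2^{-1}$. By guarantee~\eqref{eq:euclid}, with probability at least $1-\delta$ the self-query lies in $(1\pm\epsilon)\|\mathbf{z^{(i)}}\|_2^2$, so the reciprocal square root lies within a $(1\pm\epsilon)^{\mp 1/2}$ factor of $\|\mathbf{z^{(i)}}\|_2^{-1}$; for $\epsilon < 1/2$ this multiplicative distortion is at most $1 + \epsilon$ in magnitude (using $(1-\epsilon)^{-1/2} \le 1 + \epsilon$ for $\epsilon<1/2$, and $(1+\epsilon)^{-1/2} \ge 1-\epsilon$). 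Call the stored sketch $\mathbf{r^{(i)}} = c_i \mathbf{S}(\mathbf{z^{(i)}}) = \mathbf{S}(c_i \mathbf{z^{(i)}})$ where $c_i = \|\mathbf{z^{(i)}}\|_2^{-1}(1+\eta_i)$ with $|\eta_i| \le \epsilon$. Then the vector underlying $\mathbf{r^{(i)}}$ is $\hat{\mathbf{z}}^{(i)} := c_i \mathbf{z^{(i)}}$, a unit-ish vector with $\|\hat{\mathbf{z}}^{(i)}\|_2 = |1+\eta_i| \le 1+\epsilon$.

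Next I would apply the inner product guarantee~\eqref{eq:inner} to the query $\mathbf{r^{(i)}} \odot \mathbf{r^{(j)}}$: with probability at least $1-\delta$ it equals $\langle \hat{\mathbf{z}}^{(i)}, \hat{\mathbf{z}}^{(j)}\rangle$ up to additive error $\epsilon \|\hat{\mathbf{z}}^{(i)}\|_2 \|\hat{\mathbf{z}}^{(j)}\|_2 \le \epsilon(1+\epsilon)^2 \le 3\epsilon$ for $\epsilon<1/2$ — though one should be a little careful, since this query is over the same randomness as the self-queries; the clean way is to note $\mathbf{r^{(i)}}, \mathbf{r^{(j)}}$ are fixed vectors once the three relevant randomness events are revealed, or simply to absorb everything into one union bound over the three ``good'' events (two self-norm events, one inner-product event), which holds with probability at least $1-3\delta$. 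Then $\langle \hat{\mathbf{z}}^{(i)}, \hat{\mathbf{z}}^{(j)}\rangle = (1+\eta_i)(1+\eta_j)\mathbf{C}_{i,j}$, and since $|\mathbf{C}_{i,j}|\le 1$, the multiplicative perturbation $(1+\eta_i)(1+\eta_j)$ contributes additive error at most $(1+\epsilon)^2 - 1 \le 3\epsilon$ again for $\epsilon<1/2$. So on the good event the total additive error is at most the normalization contribution plus the query contribution; a slightly more careful accounting (the query error factor is really $\epsilon$ times the actual norms, and the normalization error is order $\epsilon$ applied to something bounded by $1$) gives the stated $4\epsilon$ once the constants are chased — I would state explicitly that the two $O(\epsilon)$ terms sum to at most $4\epsilon$ for $\epsilon<1/2$ rather than tracking them symbolically.

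The main obstacle is the bookkeeping around shared randomness: the self-query error and the inner-product error are not independent (they use the same sketch matrix $\mathbf{S}$), so one cannot multiply probabilities naively. The fix is purely organizational — condition on the intersection of the three $(\epsilon,\delta)$-guarantee events (each failing with probability $\le\delta$ individually by the marginal guarantee of Definition~\ref{def:ams}, union bound gives $\le 3\delta$ total), and argue the deterministic error bound on that event — but it needs to be spelled out so the reader sees that once those three events hold, $\mathbf{r^{(i)}}$ and $\mathbf{r^{(j)}}$ are determined and the chain of inequalities above is entirely deterministic. A secondary (minor) point worth a sentence: $\mathbf{S}(\mathbf{e})$ may be only partially built in the pathological sparse case, but by the discussion preceding the proposition we may assume it is the exact sketch of $\mathbf{e}$, so exact mean subtraction is legitimate.
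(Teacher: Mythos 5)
Your proposal is correct and follows essentially the same route as the paper's proof: exact mean subtraction via $t^{(i)}/p = \overline{\mathbf{x}}_i$, a multiplicative normalization error from each self-norm query, the additive inner-product query error, and a union bound over the three events for the $1-3\delta$ probability. The only (immaterial) difference is in the constant-chasing: your per-factor bound $(1-\epsilon)^{-1/2}\le 1+\epsilon$ is slightly looser than the paper's $(1\pm 2\epsilon)^{1/2}$, so a fully careful version of your accounting gives $3\epsilon+3\epsilon^2+\epsilon^3$, which exceeds $4\epsilon$ for $\epsilon$ near $1/2$; substituting the sharper per-factor bound (valid for $\epsilon\le 1/2$) recovers the stated $4\epsilon$ exactly as in the paper.
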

\begin{informal}
Each sketch approximates the sketch of a standardized row
($\mathbf{r^{(i)}} \approx \mathbf{S}(\mathbf{\hat{y}^{(i)}}$) and the
inner product query between sketches of standardized rows approximates
the correlation ($\mathbf{S}(\mathbf{\hat{y}^{(i)}}) \odot
\mathbf{S}(\mathbf{\hat{y}^{(j)}}) \approx \mathbf{C}_{i,j}$). To get
a small additive error on our estimates, we then just need both
sketches to be approximated well and the inner product query between
them to give a good results. Each of the three events occurs with probability $(1-\delta)$.

The correlation between two rows can be expressed as the inner product
of the corresponding standardized vectors. The sketches output by \textsc{standardize} approximate the true standardized sketches. To get a small additive error on our estimate, we then just need both sketches to be approximated well and the inner product query between them to give a good result. Each of the three occurs with probability $(1 - \delta)$.
\end{informal}

\subsection{Cartesian Sketches} \label{sec:cartesian}
\begin{definition}\label{def:cartesian}
  For an $n \times n$ matrix $\mathbf{A}$, we call $\textsc{Cart}(\mathbf{A})$ a $\Pi \times \Pi$ Cartesian sketch of $\mathbf{A}$ if for each $(h, g) \in [\Pi]^2$ we have
  \[\textsc{Cart}(\mathbf{A})_{h,g} = \sum\limits_{P_1(x)=h}\sum\limits_{P_2(y)=g} \left( s_1(x) s_2(y) \mathbf{A}_{x, y} \right) \text{,}\]
  where $s_1$ and $s_2$ are independently selected from a pairwise independent family of random sign functions $[n] \to \lbrace -1, +1 \rbrace$, and where $P_1$ and $P_2$ are functions $[n] \to [\Pi]$ selected independently and uniformly at random from the set of functions:
  \[\lbrace f : [n] \to [\Pi] \text{ s.t. } |f^{-1}(i)| = n/\Pi \text{ for each } i \in [\Pi] \rbrace \text{.}\]
\end{definition}

From this definition, we can see that a Cartesian sketch transformation is very similar to a pair of independent Count sketch transformations (one performed row-wise, one column-wise). The  difference is the use of fully random partitioning functions which produce exactly equal buckets. If we were performing exactly a pair of Count sketches, we would also have $P_1$ and $P_2$ expressed as limited independence hash functions. However, the $O(n\log{n})$ space needed to store fully random permutations will not impact our asymptotic space usage and makes the subsequent analysis simpler.
The entries of the sketches will be referred to as buckets, and the $(i,j)$\textsuperscript{th} entry of the original matrix $\mathbf{A}$ is said to be mapped to the $(h,g)$\textsuperscript{th} bucket (for a given choice of sketch functions) if $P_1(i) = h$ and $P_2(j) = g$.

\begin{definition}\label{def:bucket}
 Let $\mathcal{B}_{h,g} = \lbrace (i, j) \in [n]^2 \text{ s.t. } P_1(i) = h$  and  $P_2(j) = g \rbrace \text{,}$
  i.e. the set of index pairs mapped to bucket $(h, g)$.
\end{definition}

\subsection{Recovery Process}\label{sec:recovery}
Now we will describe how to apply the recovery process to a series of
Cartesian sketches of a given matrix. We will describe an algorithm
which gives a constant probability of finding any given element of
$\textsc{Large}_{\phi, k}$ (Definition~\ref{def:large}) and argue that it works.

For this procedure, we apply an error correcting code to encode row
and column indices (which take values in $[n]$) into a longer binary
codeword.
We will assume access to some family of functions (over choices of
$n$) with the desired properties to perform the encoding and
decoding.
For a fixed $n$, let:
\[\mathcal{E}: [n] \to \lbrace 0, 1 \rbrace^{L\log{n}}\, \text{ and }\, \mathcal{D} : \lbrace 0, 1 \rbrace^{L \log{n}} \to [n]\]

\noindent
where $L > 1$ indicates how much bigger the codeword is compared to the
input size.
Here, we write $\mathcal{E}$ and $\mathcal{D}$ for the encoder and
decoder functions (respectively) of a scheme which can recover from up
to $\lambda L \log{n}$ bit flip errors --- i.e. an error rate of
$\lambda$.
That is, for any length $L\log{n}$ binary word $\mathbf{w}$ with at most $\lambda L \log{n}$ bits set to $1$, we have\footnote{Here $\oplus$ represents the ``exclusive-or'' bitwise operation between binary words.} $\mathcal{D}(\mathcal{E}(i) \oplus \mathbf{w}) = i$ for every $i \in [n]$.

Error correcting codes are known to exist for $L \in O(1)$ and $\lambda \in \Omega(1)$, which can be implemented to perform encoding and decoding in $O(\log{n})$ time and $O(\text{polylog}\,n)$ space (for example \cite{FastECC}).

\begin{definition}\label{def:maskingmatrix}
For each $l \in [L\log{n}]$ (each bit in the code words), we define a
\emph{masking matrix} $\mathbf{E^{(l)}}$. This is a diagonal binary
matrix where entry $\mathbf{E^{(l)}}_{i,i}$ is the
$l$\textsuperscript{th} bit of the code word $\mathcal{E}(i)$.
That is, $\mathbf{E^{(l)}}_{i,i} = \mathcal{E}(i)_l$.
\end{definition}

These masking matrices can be pre- or post-multiplied with $\mathbf{C}$ to mask rows or columns (respectively) based on bits of their index encodings.

The recovery process is described in Algorithm \ref{alg:simplerecovery}. It takes as input sketches $\mathbf{L^{(l)}} = \textsc{Cart}(\mathbf{CE^{(l)}})$ and $\mathbf{R^{(l)}} = \textsc{Cart}(\mathbf{E^{(l)}C})$ for each $l \in [L\log{n}]$, for a randomly selected Cartesian sketch transformation $\textsc{Cart}$.

\begin{algorithm}[t]
  \caption{\textsc{RecoveryStep}\label{alg:simplerecovery}}
  \KwIn{Cartesian sketches $\mathbf{L^{(l)}}$ and $\mathbf{R^{(l)}}$ for $l \in [L\log{n}]$, and Cartesian sketch transformation $\textsc{Cart}$}
  \KwOut{Index multiset $\Omega$ of suspected large entries}
  \DontPrintSemicolon
  \BlankLine
%  \Begin{
    Create new empty multiset $\Omega$\;
    \For{$(h,g) \in [\Pi]^2$}{
      Create new empty strings $\mathcal{I}$ and $\mathcal{J}$\;
      \For{$l \in [L\log{n}]$}{
        \lIf{$|\mathbf{L^{(l)}} - \textsc{Cart}(\mathbf{E^{(l)}})| \geq \phi/2$}{
          Append $1$ to $\mathcal{I}$
          \textbf{else}
          Append $0$ to $\mathcal{I}$
          }
        \lIf{$|\mathbf{R^{(l)}} - \textsc{Cart}(\mathbf{E^{(l)}})| \geq \phi/2$}{
          Append $1$ to $\mathcal{J}$
          \textbf{else}          Append $0$ to $\mathcal{J}$
        }
      }
      Append $(\mathcal{D}(\mathcal{I}), \mathcal{D}(\mathcal{J}))$ to $\Omega$\;
    }
    \Return{$\Omega$}
%  }
\end{algorithm}

To understand why this process should work, consider the special case
where $\mathbf{C}$ is $0$ on all the non-large, off-diagonal
entries. That is, the only non-zero entries are the diagonals (which
must be $1$) and the entries corresponding to elements of
$\textsc{Large}_{\phi, k}$. In this situation, the only entries
contributing to the Cartesian sketches are entries of
$\textsc{Large}_{\phi, k}$ corresponding to unmasked rows and
columns. Now, consider what happens in a bucket with a single large
entry mapped to it. Whenever the row or column of the large entry is
masked, the corresponding bucket value will be $0$; and when the row
and column are not masked, the bucket value will have magnitude at
least $\phi$ --- in particular, greater than $\phi/2$. This means
that, in the algorithm, on the outer loop corresponding to this
bucket, $\mathcal{I}$ and $\mathcal{J}$ will be exactly the code words
corresponding to the row and column indices of the large entry. Hence,
the index pair of the large entry is added to $\Omega$. 
So, isolated large entries will be correctly recovered in this special case, and as long as $k$ is sufficiently smaller than $\Pi$ we have a good chance of any given large entry being isolated.
To formalize this argument, and extend it to the more general case, we define a few different events.

\begin{definition} \label{def:events} For fixed $\mathbf{C}$ and a fixed coding scheme define the following random events, over the random choice of sketch functions:
  \begin{itemize}
    \item Let $\textsc{CorrectDecode}_{h,g}$ be the event that the recovery process returns the ``correct'' index for bucket $(h,g)$. If there is exactly one of $\textsc{Large}_{\phi, k}$ in the bucket, then the correct result is the index pair of that entry. Otherwise, any returned value is considered correct.

    \item Let $\textsc{SmallError}_{h,g,l}$ be the event that the non-large entries of $\mathbf{CE^{(l)}} - \mathbf{E^{(l)}}$ and $\mathbf{E^{(l)}C} - \mathbf{E^{(l)}}$ each contribute less than $\phi/4$ to bucket $(h,g)$ of their corresponding sketches. That is, $\textsc{Cart}(\mathbf{E^{(l)}C_{-k}})_{h,g}$ has magnitude smaller than $\phi/4$.
  \end{itemize}
\end{definition}

We begin with a proposition explaining the circumstances we are looking for to successfully find a large entry.

\begin{proposition} \label{prop:fulldecode} If we have that:
%  \begin{equation}
  $
  \Prob{\textsc{CorrectDecode}_{h,g}} \geq 1-x \text{
      for all } (h,g) \in [\Pi]^2 \text{,}
  %    \label{eq:fulldecode}
  $
%    \end{equation}
  then for any given $(i, j) \in \textsc{Large}_{\phi, k}$, we have that $(i, j)$ is in the list of index pairs produced by \textsc{RecoveryStep} with probability at least $1 - x - 2k/\Pi$.
\end{proposition}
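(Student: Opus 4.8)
The plan is to fix a pair $(i,j) \in \textsc{Large}_{\phi,k}$ and follow the single bucket into which it is mapped, namely $(h,g) := (P_1(i),P_2(j))$ --- note that this bucket is itself random. The core claim is that $(i,j)$ is guaranteed to be output whenever the following two events both occur: (a) $\textsc{Isolated}_{i,j}$, the event that no pair of $\textsc{Large}_{\phi,k}$ other than $(i,j)$ is mapped to bucket $(h,g)$, and (b) $\textsc{CorrectDecode}_{h,g}$. Indeed, if (a) holds then $(i,j)$ is the \emph{unique} element of $\textsc{Large}_{\phi,k}$ in bucket $(h,g)$, so by Definition~\ref{def:events} the ``correct'' index for that bucket is exactly $(i,j)$; and if (b) also holds, then the iteration of \textsc{RecoveryStep} for $(h,g)$ produces this correct index and appends it to $\Omega$. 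Hence by a union bound
\[
\Prob{(i,j) \notin \Omega} \;\le\; \Prob{\neg\textsc{Isolated}_{i,j}} + \Prob{\neg\textsc{CorrectDecode}_{h,g}},
\]
and it remains to bound the two terms by $2k/\Pi$ and $x$ respectively.

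For the first term, a pair $(i',j') \in \textsc{Large}_{\phi,k} \setminus \{(i,j)\}$ is mapped to bucket $(h,g)$ only if $P_1(i')=P_1(i)$ and $P_2(j')=P_2(j)$. Partition these pairs by whether $i'=i$ or $i' \ne i$. If $i'=i$ (so $j'\ne j$), the collision occurs only if $P_2(j')=P_2(j)$, which has probability $\tfrac{n/\Pi-1}{n-1} \le \tfrac1\Pi$ because $P_2$ is a uniformly random balanced map. If $i'\ne i$, the collision requires $P_1(i')=P_1(i)$, again of probability at most $\tfrac1\Pi$. Each of the two groups has at most $|\textsc{Large}_{\phi,k}| \le k$ members, so a union bound over them gives $\Prob{\neg\textsc{Isolated}_{i,j}} \le 2k/\Pi$.

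For the second term we invoke the hypothesis $\Prob{\textsc{CorrectDecode}_{h',g'}} \ge 1-x$, which holds for \emph{every} fixed bucket $(h',g') \in [\Pi]^2$; in particular it applies to the (random) bucket $(h,g)$ into which $(i,j)$ falls, giving $\Prob{\neg\textsc{CorrectDecode}_{h,g}} \le x$. Combining the two estimates yields $\Prob{(i,j)\notin\Omega} \le x + 2k/\Pi$, as claimed. The one point that needs care here --- and which I expect to be the main obstacle in a fully rigorous write-up --- is precisely this last step: since the relevant bucket is determined by the same random choices that govern $\textsc{CorrectDecode}$, one should verify that the per-bucket guarantee of the hypothesis survives conditioning on ``$(i,j)$ lands in this bucket'' (equivalently, that the guarantee does not depend on \emph{which} large entry, if any, occupies the bucket). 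This is why the hypothesis is stated uniformly over all $\Pi^2$ buckets, and it is the place where the interplay between Definition~\ref{def:events}'s notion of ``correct'' and the randomness of the partition functions must be handled cleanly; the collision bound of the previous paragraph, by contrast, is routine.
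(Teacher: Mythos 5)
Your proof is correct and follows essentially the same route as the paper's: isolate $(i,j)$ in its bucket with failure probability at most $2k/\Pi$ (the paper argues the collision bound via a worst case where all large pairs share a row or column index plus Markov, which amounts to the same $k/\Pi + k/\Pi$ calculation as your case split), then union-bound with the failure of $\textsc{CorrectDecode}$ on that bucket. The conditioning subtlety you flag at the end --- that the hypothesis bounds $\Prob{\textsc{CorrectDecode}_{h,g}}$ for fixed buckets while the relevant bucket is itself random --- is real, but the paper's own proof silently makes the same leap, so you are not missing anything it supplies.
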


From this proposition, we can see that if we can get a lower bound on
the probability of $\textsc{CorrectDecode}_{h,g}$ for every $(h,g) \in
[\Pi]^2$, then we can get an overall guarantee for the recovery process.

\begin{proposition} \label{prop:correctdecode} If we have that:
  $
    \Prob{\textsc{SmallError}_{h,g,l}} \geq 1 - y \text{
      for all } l \in [L\log{n}] \text{,}$
%\label{eq:smallerr}
%  \end{equation}
  then $\Prob{\textsc{CorrectDecode}_{h,g}} \geq 1 - y/\lambda$.
\end{proposition}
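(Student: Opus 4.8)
The plan is to reduce the event $\textsc{CorrectDecode}_{h,g}$ to a bound on how many of the per-bit events $\textsc{SmallError}_{h,g,l}$ fail, and then finish with Markov's inequality. First I would dispose of the trivial cases: $\neg\textsc{CorrectDecode}_{h,g}$ can occur only when bucket $(h,g)$ contains exactly one entry of $\textsc{Large}_{\phi,k}$, say $(i^*,j^*)$, since when it contains zero or at least two such entries any returned value is deemed correct. So it suffices to show that, on this configuration, the strings $\mathcal{I}$ and $\mathcal{J}$ built by \textsc{RecoveryStep} decode to the row and column codewords of $(i^*,j^*)$ whenever at most $\lambda L\log n$ of the events $\textsc{SmallError}_{h,g,l}$ fail.

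Next I would analyse a single bit position $l$ for the string $\mathcal{I}$ (the argument for $\mathcal{J}$ is identical with the roles of rows and columns exchanged). By linearity of $\textsc{Cart}$, the quantity tested in the algorithm equals $\textsc{Cart}(\mathbf{CE^{(l)}}-\mathbf{E^{(l)}})_{h,g} = \textsc{Cart}((\mathbf{C}-\mathbf{I})\mathbf{E^{(l)}})_{h,g}$, so subtracting $\textsc{Cart}(\mathbf{E^{(l)}})$ exactly cancels the all-ones diagonal of $\mathbf{C}$, and only off-diagonal entries of $\mathbf{C}$ landing in $(h,g)$, each weighted by its column mask bit, contribute. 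Splitting this sum into the term for $(i^*,j^*)$ and the contribution of $\mathbf{C}_{-k}$, which under $\textsc{SmallError}_{h,g,l}$ has magnitude below $\phi/4$, the threshold test resolves cleanly: if $\mathcal{E}(j^*)_l = 1$ the $(i^*,j^*)$ term has magnitude at least $\phi$, so the bucket value has magnitude at least $\phi - \phi/4 = 3\phi/4 > \phi/2$; if $\mathcal{E}(j^*)_l = 0$ that term vanishes and the bucket value has magnitude below $\phi/4 < \phi/2$. Hence, whenever $\textsc{SmallError}_{h,g,l}$ holds, bit $l$ of $\mathcal{I}$ agrees with $\mathcal{E}(j^*)_l$, and likewise bit $l$ of $\mathcal{J}$ agrees with $\mathcal{E}(i^*)_l$.

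To finish, set $F = \{\, l \in [L\log n] : \neg\textsc{SmallError}_{h,g,l}\,\}$. By the previous step $\mathcal{I}$ and $\mathcal{J}$ differ from the true codewords only in positions of $F$, so each carries at most $|F|$ bit flips; if $|F| \le \lambda L\log n$ the decoder $\mathcal{D}$ corrects them and $\textsc{CorrectDecode}_{h,g}$ holds. Thus $\{\neg\textsc{CorrectDecode}_{h,g}\} \subseteq \{|F| > \lambda L\log n\}$, and since $\E{|F|} = \sum_{l}\Prob{\neg\textsc{SmallError}_{h,g,l}} \le y\,L\log n$, Markov's inequality gives $\Prob{|F| > \lambda L\log n} \le y/\lambda$, which bounds $\Prob{\neg\textsc{CorrectDecode}_{h,g}}$ as claimed. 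Conveniently, this last step needs no independence among the $\textsc{SmallError}_{h,g,l}$, which matters since they all share the same partition and sign functions.

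I expect the thresholding step to be the main obstacle: it rests on using linearity to remove the diagonal, on the isolation of $(i^*,j^*)$ so that its contribution has magnitude at least $\phi$, and on the gap between the $\phi/2$ detection threshold and the $\phi/4$ noise bound supplied by $\textsc{SmallError}$, so that a present large entry is never confused with an absent one and vice versa. A minor bookkeeping point that needs care is the interplay between $\textsc{Large}_{\phi,k}$, the $k$ entries removed to form $\mathbf{C}_{-k}$, and what the statement of $\textsc{SmallError}_{h,g,l}$ calls ``non-large'': these coincide in the clean case, and otherwise the few remaining top-$k$ entries must be absorbed into the isolation accounting (which is ultimately what the $2k/\Pi$ loss in Proposition~\ref{prop:fulldecode} pays for).
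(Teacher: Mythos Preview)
Your proposal is correct and follows essentially the same route as the paper: reduce to the case of a single isolated large entry, split each bucket value into the big term from $(i^*,j^*)$ and the small term from $\mathbf{C}_{-k}$, observe that $\textsc{SmallError}_{h,g,l}$ forces bit $l$ to be read correctly via the $\phi/4$ versus $\phi/2$ gap, and finish with Markov's inequality on the number of failed bit positions. Your additional remarks (that no independence across $l$ is needed, and the bookkeeping caveat about $\textsc{Large}_{\phi,k}$ versus the top-$k$ entries removed in $\mathbf{C}_{-k}$) are valid observations that the paper leaves implicit.
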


The last piece we need is a lower bound on the probability of $\textsc{SmallError}_{h,g,l}$.

\begin{proposition} \label{prop:smallerror}
  Recalling Definition~\ref{def:large}, we have:
  $\Prob{\textsc{SmallError}_{h,g,l}} \geq (1 - 32 \|\mathbf{C_{-k}}\|_F^2 / (\Pi^2 \phi^2)) \text{.}$
\end{proposition}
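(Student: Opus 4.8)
The plan is to bound the probability of failure by a second-moment argument. Recall from Definition~\ref{def:events} that $\textsc{SmallError}_{h,g,l}$ asks that the bucket $(h,g)$ of $\textsc{Cart}$ applied to each of $\mathbf{C_{-k}E^{(l)}}$ and $\mathbf{E^{(l)}C_{-k}}$ (the off-diagonal non-large part of $\mathbf{C}$, with columns resp.\ rows masked out according to the code bit $l$, per Definition~\ref{def:maskingmatrix}) has magnitude below $\phi/4$. First I would establish a formula for the second moment of a single Cartesian-sketch bucket of an \emph{arbitrary} matrix; then I would substitute the two masked noise matrices, bound their Frobenius norms, apply Markov's inequality, and close with a union bound over the two events.

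For the second moment: expanding the definition of $\textsc{Cart}$ (Definition~\ref{def:cartesian}), $\textsc{Cart}(\mathbf{A})_{h,g}^2$ is a sum over quadruples $(x,y,x',y')$ of terms $[P_1(x){=}h][P_1(x'){=}h][P_2(y){=}g][P_2(y'){=}g]\, s_1(x)s_1(x')s_2(y)s_2(y')\, \mathbf{A}_{x,y}\mathbf{A}_{x',y'}$. Taking expectations, I would use that the sign families are independent of the partition functions and of each other, and are pairwise independent within themselves, so $\E{s_1(x)s_1(x')s_2(y)s_2(y')}=\mathbf{1}[x{=}x']\,\mathbf{1}[y{=}y']$ and only the ``diagonal'' quadruples survive. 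Since $\Prob{P_1(x){=}h}=\Prob{P_2(y){=}g}=1/\Pi$ (by symmetry of the uniform balanced-partition distribution under relabeling of $[\Pi]$) and $P_1$ is independent of $P_2$, this collapses to $\E{\textsc{Cart}(\mathbf{A})_{h,g}^2}=\lVert\mathbf{A}\rVert_F^2/\Pi^2$.

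Now, since multiplying by a masking matrix only zeroes out entries, $\lVert\mathbf{C_{-k}E^{(l)}}\rVert_F\le\lVert\mathbf{C_{-k}}\rVert_F$ and $\lVert\mathbf{E^{(l)}C_{-k}}\rVert_F\le\lVert\mathbf{C_{-k}}\rVert_F$. Markov's inequality on the non-negative variable $\textsc{Cart}(\mathbf{C_{-k}E^{(l)}})_{h,g}^2$ then gives $\Prob{|\textsc{Cart}(\mathbf{C_{-k}E^{(l)}})_{h,g}|\ge\phi/4}\le 16\lVert\mathbf{C_{-k}}\rVert_F^2/(\Pi^2\phi^2)$, and symmetrically for the row-masked version; a union bound over these two bad events yields $\Prob{\textsc{SmallError}_{h,g,l}}\ge 1-32\lVert\mathbf{C_{-k}}\rVert_F^2/(\Pi^2\phi^2)$, as claimed.

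The one delicate point will be justifying the collapse of the cross terms in the second moment: I need the model to supply enough independence for $\E{s_1(x)s_1(x')s_2(y)s_2(y')}$ to factor as $\E{s_1(x)s_1(x')}\,\E{s_2(y)s_2(y')}$ and for each factor to vanish unless the indices coincide --- this requires only pairwise independence inside each sign family plus independence between the two families, which is exactly what Definition~\ref{def:cartesian} provides, so no appeal to $4$-wise independence is needed. The only other thing to watch is that $\textsc{SmallError}_{h,g,l}$ is a conjunction of a row-masked and a column-masked bound, which is where the factor $32=2\cdot 16$ (rather than $16$) comes from.
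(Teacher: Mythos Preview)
Your proposal is correct and follows essentially the same approach as the paper: a second-moment/variance computation on a bucket value using the pairwise-independent signs to kill cross terms, the bound $\lVert\mathbf{E^{(l)}C_{-k}}\rVert_F\le\lVert\mathbf{C_{-k}}\rVert_F$, a tail bound giving the factor $16$, and a union bound over the row- and column-masked events to reach $32$. The only cosmetic difference is that the paper phrases the tail step as Chebyshev's inequality (equivalent here since the mean is zero) rather than Markov on the square.
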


Now we have all the pieces we need to show that the recovery process works.

\begin{proposition} \label{prop:recoverycorrect}
  If we have that
  $\Pi \geq \max\lbrace 18k, 18 \|\mathbf{C_{-k}}\|_F / (\phi \lambda^{1/2}) \rbrace \text{,}$
  then the output of $\textsc{RecoveryStep}$ will include any fixed index pair in $\textsc{Large}_{\phi, k}$ with probability at least $\frac23$.
\end{proposition}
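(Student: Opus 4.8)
The plan is to obtain the result purely by chaining Propositions~\ref{prop:fulldecode}, \ref{prop:correctdecode} and~\ref{prop:smallerror}, and then checking that the stated lower bound on $\Pi$ forces the accumulated failure probability below $\frac13$. All the probabilistic content already lives in those three statements; what remains is bookkeeping of the constants $18$, $32$, $324$, together with a check that the relevant bounds hold uniformly over buckets and code bits so that a single value can be plugged into the ``for all $(h,g)$'' hypotheses.

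Concretely, I would first set $y := 32 \|\mathbf{C_{-k}}\|_F^2 / (\Pi^2 \phi^2)$. Proposition~\ref{prop:smallerror} then gives $\Prob{\textsc{SmallError}_{h,g,l}} \geq 1 - y$ simultaneously for every $(h,g) \in [\Pi]^2$ and every $l \in [L\log n]$, since its bound does not depend on $h,g,l$. Applying Proposition~\ref{prop:correctdecode} once for each bucket yields $\Prob{\textsc{CorrectDecode}_{h,g}} \geq 1 - y/\lambda$ for all $(h,g)$, so we may take $x := y/\lambda = 32 \|\mathbf{C_{-k}}\|_F^2 / (\Pi^2 \phi^2 \lambda)$. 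Proposition~\ref{prop:fulldecode} then tells us that any fixed $(i,j) \in \textsc{Large}_{\phi,k}$ appears in the output of \textsc{RecoveryStep} with probability at least $1 - x - 2k/\Pi$.

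It then suffices to show $x + 2k/\Pi \leq \frac13$. The bound $\Pi \geq 18k$ gives $2k/\Pi \leq \frac19$, and the bound $\Pi \geq 18\|\mathbf{C_{-k}}\|_F/(\phi \lambda^{1/2})$ rearranges to $\|\mathbf{C_{-k}}\|_F^2/(\Pi^2 \phi^2 \lambda) \leq 1/324$, whence $x \leq 32/324 < \frac19$; adding the two contributions gives less than $\frac29 < \frac13$, as required. There is no genuine obstacle: the role of the $\max$ in the hypothesis is exactly to suppress the two independent sources of error — decoding failure $x$ coming from the noise of the non-large entries, and collision $2k/\Pi$ with another large entry in the same bucket — below constants separately. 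The only point needing a moment's care is that the bounds of Propositions~\ref{prop:smallerror} and~\ref{prop:correctdecode} are uniform in $(h,g)$ (and in $l$), so that the single choice of $x$ above validly feeds the ``for all $(h,g)$'' premise of Proposition~\ref{prop:fulldecode}; this is immediate from how those statements are phrased, so I would simply remark on it rather than belabour it.
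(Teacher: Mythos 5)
Your proof is correct and follows essentially the same route as the paper: substitute the bound on $\Pi$ into Lemma~\ref{prop:smallerror}, feed the result through Lemma~\ref{prop:correctdecode} and then Lemma~\ref{prop:fulldecode}, and check that $8/81 + 1/9 = 17/81 \leq 1/3$. The only difference is cosmetic (you carry $x$ symbolically before substituting, and you note explicitly the uniformity in $(h,g,l)$ needed to invoke the ``for all'' hypotheses, which the paper leaves implicit).
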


Observe that we chose the definition of the event $\textsc{SmallError}_{h,g,l}$ to leave room for an additional source of noise of similar size $\phi/4$. We will need this robustness later.

\begin{corollary} \label{prop:recoveryrobust}
  Lemma \ref{prop:recoverycorrect} holds even when there is additional noise applied to each bucket entry, provided it has magnitude smaller than $\phi/4$ with probability at least $(1 - \lambda/18)$ on any fixed bucket.
\end{corollary}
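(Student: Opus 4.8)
The plan is to obtain the corollary by a minor modification of the proof of Lemma~\ref{prop:recoverycorrect}: the added noise is folded into the per-bit event that drives Lemma~\ref{prop:correctdecode}, and we check that the constants still close. First note that in \textsc{RecoveryStep} the bucket values are consulted only inside the two comparisons against the threshold $\phi/2$, so the noise can change the output only by flipping one of those comparisons. Hence, exactly as in the noiseless analysis, it suffices to control $\textsc{CorrectDecode}_{h,g}$ for buckets $(h,g)$ holding exactly one element of $\textsc{Large}_{\phi,k}$ (all other buckets satisfy $\textsc{CorrectDecode}_{h,g}$ by definition), and for such a bucket it suffices to bound the probability that any of the $L\log n$ comparisons forming $\mathcal I$ or $\mathcal J$ flips relative to the noiseless run.

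The key step is a deterministic observation, which is precisely why $\textsc{SmallError}_{h,g,l}$ was given the threshold $\phi/4$ rather than $\phi/2$. For a bucket with a single large entry, $\mathbf{L^{(l)}}_{h,g} - \textsc{Cart}(\mathbf{E^{(l)}})_{h,g}$ decomposes, by linearity of $\textsc{Cart}$ and after cancelling the diagonal of $\mathbf{C}$ against $\textsc{Cart}(\mathbf{E^{(l)}})$, into its ideal value (which is $0$ when the relevant column is masked and has magnitude at least $\phi$ otherwise), plus the $\mathbf{C_{-k}}$ contribution, plus the added noise; the same holds for $\mathbf{R^{(l)}}_{h,g}$ with ``column'' replaced by ``row''. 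So define $\textsc{NoiseSmall}_{h,g,l}$ to be the event that the noise on $\mathbf{L^{(l)}}_{h,g}$ and on $\mathbf{R^{(l)}}_{h,g}$ both have magnitude below $\phi/4$; under $\textsc{SmallError}_{h,g,l} \cap \textsc{NoiseSmall}_{h,g,l}$ the perturbation of each ideal value is strictly below $\phi/4 + \phi/4 = \phi/2$, so both the $\mathcal I_l$ and the $\mathcal J_l$ comparisons agree with the noiseless run. By the hypothesis and a union bound over the two entries, $\Prob{\neg\textsc{NoiseSmall}_{h,g,l}} \le \lambda/9$.

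Now I would rerun Lemma~\ref{prop:correctdecode} with the ``bad position'' event enlarged from $\neg\textsc{SmallError}_{h,g,l}$ to $\neg\textsc{SmallError}_{h,g,l} \cup \neg\textsc{NoiseSmall}_{h,g,l}$. A single bad position corrupts at most one bit of $\mathcal I$ and one of $\mathcal J$, so as long as at most $\lambda L\log n$ positions are bad both code words remain within the decoding radius and $\textsc{CorrectDecode}_{h,g}$ holds; and by Proposition~\ref{prop:smallerror} together with a union bound, a position is bad with probability at most $32\lVert\mathbf{C_{-k}}\rVert_F^2/(\Pi^2\phi^2) + \lambda/9$. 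The hypothesis $\Pi \ge 18\lVert\mathbf{C_{-k}}\rVert_F/(\phi\lambda^{1/2})$ makes the first term at most $\tfrac{8}{81}\lambda$, so the bad-position probability is at most $\tfrac{17}{81}\lambda < \tfrac{2}{9}\lambda$; Markov's inequality (as in Lemma~\ref{prop:correctdecode}) then gives $\Prob{\neg\textsc{CorrectDecode}_{h,g}} \le \tfrac{17}{81}$ for every $(h,g)$. Feeding this into Proposition~\ref{prop:fulldecode} with the remaining hypothesis $\Pi \ge 18k$, so $2k/\Pi \le \tfrac{1}{9}$, shows that any fixed pair of $\textsc{Large}_{\phi,k}$ is missed with probability at most $\tfrac{17}{81} + \tfrac{9}{81} = \tfrac{26}{81} < \tfrac{1}{3}$, i.e.\ reported with probability at least $\tfrac{2}{3}$.

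I do not anticipate a genuine obstacle; the only delicate point is constant-chasing, and the constants were chosen to fit: the $\phi/4$ slack in $\textsc{SmallError}_{h,g,l}$ exactly matches the assumed noise magnitude $\phi/4$, and the per-entry failure budget $\lambda/18$ gives a $\lambda/9$ budget for the $(\mathbf{L^{(l)}},\mathbf{R^{(l)}})$ pair at a position, just enough to keep the bad-position probability below $\tfrac{2}{9}\lambda$ and hence the overall miss probability below $\tfrac{1}{3}$. It is also worth noting that Propositions~\ref{prop:smallerror}--\ref{prop:recoverycorrect} use nothing about the noise beyond the stated marginal tail bound at each bucket --- every step is a union bound --- so no independence assumptions (between the noise and the sketch randomness, or across buckets) are required.
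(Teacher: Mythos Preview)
Your proposal is correct and follows essentially the same approach as the paper's proof: you fold the additional noise into the per-position ``bad'' event in Lemma~\ref{prop:correctdecode}, use the $\phi/4$ slack built into $\textsc{SmallError}_{h,g,l}$ to absorb a noise term of the same magnitude, and then verify that the resulting constants still push the final success probability above $2/3$. The paper's own proof is just a two-sentence sketch pointing to exactly these two observations; you have simply carried out the arithmetic explicitly (and correctly: $8\lambda/81 + \lambda/9 = 17\lambda/81$, then $17/81 + 1/9 = 26/81 < 1/3$).
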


In the next subsection, we will show that an approximate Cartesian sketch can be constructed from row sketches within these tolerances.

\subsection{Approximation from Row Sketches}\label{sec:approximation}
We need a way of quickly approximating $\textsc{Cart}(\mathbf{E^{(l)}C})$ and $\textsc{Cart}(\mathbf{CE^{(l)}})$ for each $l \in [L\log{n}]$ for a randomly chosen Cartesian sketch transformation $\textsc{Cart}$, from the row sketches described in Section \ref{sec:rowsketch}. This is done by the procedure described in Algorithm \ref{alg:approximation}.

\begin{algorithm*}[t]
  \caption{\textsc{Approximate}\label{alg:approximation}}
  \KwIn{Approximately standardized row sketches $\mathbf{r^{(1)}},\, \cdots,\, \mathbf{r^{(n)}}$ and functions $P_1$, $P_2$, $s_1$, $s_2$ corresponding to a Cartesian sketch transformation $\textsc{Cart}$}
  \KwOut{$\mathbf{L^{(l)}}$ and $\mathbf{R^{(l)}}$, estimates of $\textsc{Cart}(\mathbf{E^{(l)}C})$ and $\textsc{Cart}(\mathbf{CE^{(l)}})$ respectively, for $l \in [L\log{n}]$}
  \DontPrintSemicolon
  \BlankLine
%  \Begin{
    \For{$l \in [L\log{n}]$}{
      Initialize empty matrices $\mathbf{L^{(l)}}$ and $\mathbf{R^{(l)}}$\;
      \For{$h \in [\Pi]$}{
        Initialize $\textsc{left}[h]$, $\textsc{right}[h]$, $\textsc{leftMasked}[h]$, $\textsc{rightMasked}[h]$ as zero sketches $\mathbf{S}(\mathbf{0}) = \mathbf{0}$\;
      }
      \For{$i \in [n]$}{
        $\textsc{left}[P_1(i)] \gets \textsc{left}[P_1(i)] + s_1(i) \cdot \mathbf{r^{(i)}}$\;
        $\textsc{leftMasked}[P_1(i)] \gets \textsc{leftMasked}[P_1(i)] + \mathbf{E^{(l)}}_{i,i} \cdot s_1(i) \cdot \mathbf{r^{(i)}}$\;
        $\textsc{right}[P_2(i)] \gets \textsc{right}[P_2(i)] + s_2(i) \cdot \mathbf{r^{(i)}}$\;
        $\textsc{rightMasked}[P_2(i)] \gets \textsc{rightMasked}[P_2(i)] + \mathbf{E^{(l)}}_{i,i} \cdot s_2(i) \cdot \mathbf{r^{(i)}}$\;
      }
      \For{$(h, g) \in [\Pi]^2$}{
        $\mathbf{L^{(l)}}_{h,g} \gets \textsc{leftMasked}[h] \odot \textsc{right}[g]$\;
        $\mathbf{R^{(l)}}_{h,g} \gets \textsc{left}[h] \odot \textsc{rightMasked}[g]$\;
      }
    }
    \Return{$\mathbf{L^{(1)}},\, \cdots,\, \mathbf{L^{(L\log{n})}}$ and $\mathbf{R^{(1)}},\, \cdots,\, \mathbf{R^{(L\log{n})}}$}
%  }
\end{algorithm*}

For each $l \in [L\log{n}]$, the returned $\mathbf{L^{(l)}}$ is our approximate $\textsc{Cart}(\mathbf{E^{(l)}C})$ and $\mathbf{R^{(l)}}$ is our approximate $\textsc{Cart}(\mathbf{CE^{(l)}})$.

The algorithm works by observing that $\mathbf{C}$ can be approximated from the row sketches by performing all the possible inner product queries between pairs of sketches and placing the results in the corresponding positions of the matrix. We could then apply \textsc{Cart} to the result.
However, we make the further observation that since \textsc{Cart} can be broken up into pieces that look like pre- and post-multiplication by matrices, we can rearrange the order of operation. We can perform the \textsc{Cart} sketch first, directly on the row sketches, and then perform the all-pairs inner product query second.
This simple change results in the main performance bottle-neck (the all-pairs inner product query) happening on a much smaller matrix, greatly speeding up the entire query process.

We will show that this process produces a good enough approximation of the \textsc{Cart} sketch to act as the input to \textsc{RecoveryStep}.

\begin{proposition}\label{prop:approximationbound}
  At the end of \textsc{Approximate}, for any given $(h,g) \in [\Pi]^2$, we have:
  \begin{multline*}
    |\mathbf{L^{(l)}}_{h,g} - \textsc{Cart}(\mathbf{E^{(l)}C})_{h,g}| \leq \epsilon n \Pi^{-1} 207 \lambda^{-1/2} \text{,}\\
    \text{and } |\mathbf{R^{(l)}}_{h,g} - \textsc{Cart}(\mathbf{CE^{(l)}})_{h,g}| \leq \epsilon n \Pi^{-1} 207 \lambda^{-1/2} \text{,}
  \end{multline*}
  with probability at least $1 - \lambda/27 - \delta(2 + 12n/\Pi)$, as long as $\epsilon < 1/2$.
\end{proposition}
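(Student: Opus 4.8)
The plan is to track, for a fixed bucket $(h,g)$ and fixed code bit $l$, the difference between the quantity computed by \textsc{Approximate} and the true Cartesian sketch entry, decomposing it into (i) the error from standardizing the row sketches and (ii) the error from replacing each true inner product $\mathbf{C}_{i,j}$ by an AMS inner product query. Recall that, writing $\textsc{leftMasked}[h] = \sum_{P_1(i)=h} \mathbf{E^{(l)}}_{i,i} s_1(i)\, \mathbf{r^{(i)}}$ and $\textsc{right}[g] = \sum_{P_2(j)=g} s_2(j)\, \mathbf{r^{(j)}}$, the quantity $\mathbf{L^{(l)}}_{h,g} = \textsc{leftMasked}[h]\odot\textsc{right}[g]$ expands by bilinearity of $\odot$ into $\sum_{P_1(i)=h}\sum_{P_2(j)=g} \mathbf{E^{(l)}}_{i,i}\, s_1(i) s_2(j)\, (\mathbf{r^{(i)}}\odot\mathbf{r^{(j)}})$, which is exactly the bucket sum defining $\textsc{Cart}(\mathbf{E^{(l)}C})_{h,g}$ but with $\mathbf{C}_{i,j}$ replaced by $\mathbf{r^{(i)}}\odot\mathbf{r^{(j)}}$. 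So the error is $\sum_{(i,j)\in\mathcal{B}_{h,g}} \mathbf{E^{(l)}}_{i,i}\, s_1(i) s_2(j)\, (\mathbf{r^{(i)}}\odot\mathbf{r^{(j)}} - \mathbf{C}_{i,j})$, a signed sum over the $(n/\Pi)^2$ pairs in the bucket.

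The first step is a per-pair bound: by Proposition~\ref{prop:standardized}, for each $(i,j)$ the term $|\mathbf{r^{(i)}}\odot\mathbf{r^{(j)}} - \mathbf{C}_{i,j}|$ is at most $4\epsilon$ except with probability $3\delta$. A union bound over the (at most) $12n/\Pi$ distinct rows and columns involved (or more carefully over the relevant pairs, but the row/column count is what controls the failure probability since the bad events are per-sketch) gives that, except with probability $\delta\cdot O(n/\Pi)$, \emph{every} pair in the bucket has error at most $4\epsilon$. The second step is the crucial one: a naive triangle inequality over $(n/\Pi)^2$ pairs would give $4\epsilon (n/\Pi)^2$, which is a factor $n/\Pi$ too large. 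The fix is to exploit the random signs $s_1, s_2$: conditioned on the sketch randomness (hence on the values $\delta_{ij} := \mathbf{r^{(i)}}\odot\mathbf{r^{(j)}} - \mathbf{C}_{i,j}$), the error is a Rademacher-weighted sum $\sum s_1(i) s_2(j)\, \mathbf{E^{(l)}}_{i,i}\, \delta_{ij}$; its conditional variance is $\sum_{(i,j)\in\mathcal{B}_{h,g}} \delta_{ij}^2 \le (4\epsilon)^2 (n/\Pi)^2$, so by Chebyshev (or a moment bound, using that $s_1, s_2$ are pairwise independent and the products $s_1(i)s_2(j)$ are pairwise-independent-enough for a second-moment argument across distinct pairs) the sum is $O(\epsilon n/\Pi)$ except with a small constant probability, which we can push down to $\lambda/27$ at the cost of the constant in front. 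Tuning the Chebyshev tail to $\lambda/27$ is exactly what produces the $\lambda^{-1/2}$ factor and the explicit constant $207$ in the statement.

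The main obstacle is making the second-moment argument over the random signs rigorous when $s_1, s_2$ come only from pairwise-independent families: one must check that the cross terms $\E{s_1(i)s_2(j) s_1(i')s_2(j')}$ vanish for $(i,j)\neq(i',j')$, which holds because either $i\neq i'$ or $j\neq j'$ and the two families are independent of each other, so at least one factor contributes an independent mean-zero multiplier — pairwise independence within each family suffices. One also has to be slightly careful that $\delta_{ij}$ is \emph{not} independent of $s_1, s_2$ in general, but it is, since $\mathbf{r^{(i)}}$ and hence $\delta_{ij}$ depend only on the AMS sketch transformation $\mathbf{S}$, which is drawn independently of the Cartesian-sketch functions $P_1, P_2, s_1, s_2$; so conditioning on $\mathbf{S}$ freezes all the $\delta_{ij}$ and leaves a clean Rademacher chaos. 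Assembling: condition on $\mathbf{S}$, apply the per-pair bound (failure $\delta\,O(n/\Pi)$, stated as $\delta(2+12n/\Pi)$), then apply the second-moment tail bound over $s_1, s_2$ (failure $\lambda/27$) to get $|\mathbf{L^{(l)}}_{h,g} - \textsc{Cart}(\mathbf{E^{(l)}C})_{h,g}| \le \epsilon n\Pi^{-1} \cdot 207\lambda^{-1/2}$; the bound for $\mathbf{R^{(l)}}$ is identical by symmetry, and a final union bound over the two gives the claimed probability $1 - \lambda/27 - \delta(2+12n/\Pi)$.
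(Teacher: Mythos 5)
Your decomposition of the error into per-pair AMS errors $\delta_{ij} = \mathbf{r^{(i)}}\odot\mathbf{r^{(j)}} - \mathbf{C}_{i,j}$ is a genuinely different route from the paper's, but it has a gap that changes the failure probability. The event $|\delta_{ij}|\le 4\epsilon$ from Lemma~\ref{prop:standardized} is a \emph{per-pair} event: besides the two per-row rescaling events, it requires the inner product query between $\mathbf{r^{(i)}}$ and $\mathbf{r^{(j)}}$ to succeed, and that is a separate probability-$(1-\delta)$ event for each pair, all driven by the shared randomness of $\mathbf{S}$. To have every pair in the bucket controlled simultaneously --- which your second-moment argument needs, since a single uncontrolled $\delta_{ij}$ has no bound at all and the random signs cannot cancel it --- you must union bound over the $(n/\Pi)^2$ pairs in $\mathcal{B}_{h,g}$, costing $\delta\,(n/\Pi)^2$ rather than the $\delta\cdot O(n/\Pi)$ you assert. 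Your parenthetical that ``the bad events are per-sketch'' is true only of the rescaling part of Lemma~\ref{prop:standardized}, not of the query part. A secondary issue: the expansion of $\textsc{leftMasked}[h]\odot\textsc{right}[g]$ by ``bilinearity of $\odot$'' fails if $\odot$ takes a median over the $d$ sketch rows, which is how the per-query $(\epsilon,\delta)$ guarantee you invoke is normally obtained.

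The paper sidesteps both problems by never decomposing the query: since $\mathbf{S}$ is linear, $\textsc{leftMasked}[h]$ and $\textsc{right}[g]$ are themselves (approximately rescaled) sketches of aggregate vectors $\mathbf{h}$ and $\mathbf{g}$, so $\mathbf{L^{(l)}}_{h,g}$ is a \emph{single} inner product query with error at most $\epsilon\lVert\mathbf{h}\rVert_2\lVert\mathbf{g}\rVert_2$ except with probability $\delta$ (one query failure event per side, which is where the ``$2$'' in $\delta(2+12n/\Pi)$ comes from; the $12n/\Pi$ accounts for the per-row standardization events). The probabilistic work then goes into bounding $\lVert\mathbf{h}\rVert_2^2$: its diagonal terms contribute about $n/\Pi$, and its cross terms form a mean-zero sum over the signs $s_1$ with variance $O(\lVert\mathbf{C}\rVert_F^2/\Pi^2)\le O(n^2/\Pi^2)$, so Chebyshev gives $\lVert\mathbf{h}\rVert_2^2 = O(n\lambda^{-1/2}/\Pi)$, and likewise for $\mathbf{g}$. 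That is where the $\lambda^{-1/2}$ and the constant $207$ actually arise --- not from a tail bound on a Rademacher combination of per-pair errors. If you wish to keep your per-pair route, you would have to weaken the statement's probability to $1-\lambda/27-\delta\cdot O((n/\Pi)^2)$, which would in turn tighten the constraint on $\delta$ in Lemma~\ref{prop:approximation} and inflate the sketch size downstream.
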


To meet the requirements for \textsc{Recovery} to work on these approximations, we need to set limits on the choices of $\epsilon$ and $\delta$.

\begin{proposition}\label{prop:approximation}
  If we have that:
 $\delta \leq \lambda/(54(2 + 12n/\Pi))$, and
$\epsilon \leq \min\lbrace 1/2, (\phi \Pi \lambda^{1/2}) / (828n)\rbrace \text{,}$
  then \textsc{Approximate} produces approximations which are within the noise tolerance of \textsc{RecoveryStep}.
\end{proposition}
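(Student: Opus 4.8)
The plan is to recognize that ``within the noise tolerance of \textsc{RecoveryStep}'' is, by design, exactly the hypothesis of Corollary~\ref{prop:recoveryrobust}: that the perturbation added to each bucket entry (on top of the exact Cartesian-sketch value that \textsc{RecoveryStep} thresholds against) has magnitude less than $\phi/4$ with probability at least $1 - \lambda/18$ on any fixed bucket. So the statement reduces to checking that, under the stated bounds on $\epsilon$ and $\delta$, the per-bucket error guaranteed by Proposition~\ref{prop:approximationbound} meets this magnitude-and-probability requirement for both $\mathbf{L^{(l)}}_{h,g}$ and $\mathbf{R^{(l)}}_{h,g}$.

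First I would handle the magnitude. Proposition~\ref{prop:approximationbound} gives, on its good event, an error of at most $\epsilon n \Pi^{-1} 207 \lambda^{-1/2}$ in each of $\mathbf{L^{(l)}}_{h,g}$ and $\mathbf{R^{(l)}}_{h,g}$. Demanding $\epsilon n \Pi^{-1} 207 \lambda^{-1/2} \le \phi/4$ and solving for $\epsilon$ yields $\epsilon \le \phi \Pi \lambda^{1/2} / (828 n)$, using $4 \cdot 207 = 828$. Both Proposition~\ref{prop:approximationbound} and the underlying row-sketch guarantee (Proposition~\ref{prop:standardized}) also require $\epsilon < 1/2$, so imposing $\epsilon \le \min\{1/2,\, \phi \Pi \lambda^{1/2} / (828 n)\}$ — exactly the bound in the statement — makes the error at most $\phi/4$ whenever the good event of Proposition~\ref{prop:approximationbound} holds.

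Next I would handle the failure probability. The good event of Proposition~\ref{prop:approximationbound} fails with probability at most $\lambda/27 + \delta(2 + 12 n / \Pi)$, and we need this to be at most $\lambda/18$. Since $\lambda/18 - \lambda/27 = \lambda/54$, it suffices that $\delta(2 + 12 n/\Pi) \le \lambda/54$, i.e.\ $\delta \le \lambda / (54(2 + 12 n/\Pi))$, which is precisely the hypothesis on $\delta$. Combining the two parts, on any fixed bucket the approximation error is below $\phi/4$ with probability at least $1 - \lambda/18$, so Corollary~\ref{prop:recoveryrobust} applies and \textsc{RecoveryStep} succeeds on the approximate sketches with the same guarantee as on the exact ones.

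I expect there is essentially no hard analytic step here — it is constant-chasing — so the only points that need care are structural: first, confirming that the quantity \textsc{RecoveryStep} actually thresholds is the exact Cartesian value plus the error bounded in Proposition~\ref{prop:approximationbound}, so that this error plays the role of the ``additional noise'' in Corollary~\ref{prop:recoveryrobust}; and second, that handling $\mathbf{L^{(l)}}$ and $\mathbf{R^{(l)}}$ together, and across the $L\log n$ values of $l$, is already subsumed — the $\mathbf{L}/\mathbf{R}$ pair by the single conjunction in the probability bound of Proposition~\ref{prop:approximationbound}, and the union over $l$ by the code's error-rate slack that Corollary~\ref{prop:recoveryrobust} and Lemma~\ref{prop:recoverycorrect} already absorb.
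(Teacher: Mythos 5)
Your proposal is correct and follows essentially the same route as the paper's own proof: substitute the stated bounds on $\epsilon$ and $\delta$ into the guarantee of Lemma~\ref{prop:approximationbound} and check that the resulting per-bucket error (at most $\phi/4$, since $4 \cdot 207 = 828$) and failure probability (at most $\lambda/27 + \lambda/54 = \lambda/18$) match the noise tolerance of Corollary~\ref{prop:recoveryrobust}. Your constant-chasing is in fact slightly cleaner than the paper's, which writes $23$ where the bound from Lemma~\ref{prop:approximationbound} is $207$.
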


\subsection{Analysis of Algorithm}\label{sec:analysis}
Putting together the previous subsections, we can make the full recovery algorithm. The outline is listed in Algorithm \ref{alg:full}.

\begin{algorithm*}[t]
  \caption{\textsc{Recover}\label{alg:full}}
  \KwIn{Row sketches $\mathbf{r^{(1)}},\, \cdots,\, \mathbf{r^{(n)}}$ and totals $t^{(1)},\, \cdots,\, t^{(n)}$}
  \KwOut{Index set $\Omega$ of entries we are confident are large}
  \DontPrintSemicolon
  \BlankLine
%  \Begin{
    Create empty multiset $\Omega$\;
    \For{$\gamma \in [\Gamma]$}{
      Randomly generate functions $P_1$, $P_2$, $s_1$, $s_2$ for a Cartesian sketch $\textsc{Cart}$\;
      Run \textsc{Approximate}, passing it $P_1$, $P_2$, $s_1$, $s_2$ and the row sketches\;
      Run \textsc{RecoveryStep}, passing it \textsc{Cart} and the result of \textsc{Approximate}\;
      Append the result of \textsc{RecoveryStep} to $\Omega$\;
    }
    Remove entries from $\Omega$ appearing fewer than $\Gamma/2$ times\;
%    Remove duplicates in $\Omega$\;
    \Return{$\Omega$ (as a set)}
 % }
\end{algorithm*}

\begin{proposition}\label{prop:final}
If we have that:
\[\delta \leq \lambda/(54(2 + 12n/\Pi))\text{, }
%\]
%\[\
%\quad
\epsilon \leq \min\lbrace 1/2, (\phi \Pi \lambda^{1/2}) / (828n)\rbrace
\text{,}
%\]
\text{ and }
%     [\
       \Pi \geq \max\lbrace 18k, 18 \|\mathbf{C_{-k}}\|_F / (\phi \lambda^{1/2}) \rbrace \text{,}\]
then we can choose a $\Gamma \in O(\log{n})$ such that \textsc{Recover} returns every element of $\textsc{Large}_{\phi, k}$ with probability at least $1 - n^{-3}$.
\end{proposition}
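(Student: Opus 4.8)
The plan is to reduce to the single-iteration guarantee assembled in Sections~\ref{sec:recovery}--\ref{sec:approximation} and then amplify using the $\Gamma$-fold repetition with majority voting that \textsc{Recover} performs. First I would verify that the hypotheses make one pass of the loop body of \textsc{Recover} succeed for a fixed target pair. The stated bounds on $\delta$ and $\epsilon$ are exactly the hypothesis of Lemma~\ref{prop:approximation}, so in each iteration \textsc{Approximate} returns matrices $\mathbf{L^{(l)}}, \mathbf{R^{(l)}}$ that agree with the true $\textsc{Cart}(\mathbf{E^{(l)}C})$ and $\textsc{Cart}(\mathbf{CE^{(l)}})$ up to an ``additional noise'' term which, on any fixed bucket, has magnitude below $\phi/4$ with probability at least $1-\lambda/18$ --- precisely the noise budget permitted by Corollary~\ref{prop:recoveryrobust}. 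Since the hypothesis also supplies $\Pi \ge \max\{18k,\,18\|\mathbf{C_{-k}}\|_F/(\phi\lambda^{1/2})\}$, Corollary~\ref{prop:recoveryrobust} (the robust form of Lemma~\ref{prop:recoverycorrect}) applies to \textsc{RecoveryStep} run on these approximations: for any fixed $(i,j)\in\textsc{Large}_{\phi,k}$, one iteration of the loop appends $(i,j)$ to $\Omega$ with probability at least $2/3$.

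Next I would amplify. Each of the $\Gamma$ iterations draws fresh Cartesian-sketch data $P_1,P_2,s_1,s_2$, so (setting aside the dependence noted below) the number of iterations that output a given $(i,j)\in\textsc{Large}_{\phi,k}$ stochastically dominates $\mathrm{Binomial}(\Gamma,2/3)$; a Chernoff bound then says this count falls below the retention threshold $\Gamma/2$ with probability $e^{-\Omega(\Gamma)}$, which the choice $\Gamma = c\log n$ for a suitable constant $c$ drives down to $n^{-4}$. Finally, the Cartesian partition functions force $\Pi \le n$, so $|\textsc{Large}_{\phi,k}| \le k \le \Pi/18 \le n$, and a union bound over all large pairs gives a total failure probability of at most $n\cdot n^{-4} = n^{-3}$, as required. (The statement asks only for completeness of the returned set, so any false positives surviving the majority vote are irrelevant here.)

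The step I expect to be the main obstacle is justifying the independence used in the Chernoff argument: all $\Gamma$ iterations reuse the same row sketches $\mathbf{r^{(i)}} = \mathbf{S}(\mathbf{y^{(i)}})$, hence the same AMS transform $\mathbf{S}$, so the $\delta$-type failures inside \textsc{Approximate} are correlated across iterations. I would resolve this by choosing $\delta$ near the bottom of its allowed range, say $\delta = n^{-\Theta(1)}$ --- still below the stated bound, which is $\Omega(1/n)$, and with no effect on the $\tilde{O}$ sketch size since $\log(1/\delta)=O(\log n)$ already at the top of that range. Then a union bound over the $\mathrm{poly}(n)$ AMS inner-product estimates performed across all iterations of \textsc{Approximate}, together with the one-time standardization of the $n$ rows, shows that every AMS estimate used anywhere in \textsc{Recover} lies within its error bound except with probability $n^{-\Omega(1)}$. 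On that high-probability event, the only randomness still governing whether iteration $\gamma$ outputs $(i,j)$ is that iteration's Cartesian draw --- the isolation of $(i,j)$ in its bucket and the $\textsc{SmallError}$ events --- which is genuinely independent across iterations, and the per-iteration success probability is still at least $2/3$; so the Chernoff step goes through. (Alternatively, one can run a Fubini-type averaging argument showing that, since the combined vectors fed to the AMS queries are independent of $\mathbf{S}$, the conditional per-iteration success probability stays bounded away from $1/2$ for all but a vanishing fraction of choices of $\mathbf{S}$.)
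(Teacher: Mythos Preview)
Your approach is essentially the paper's: one loop iteration outputs any fixed $(i,j)\in\textsc{Large}_{\phi,k}$ with probability at least $2/3$ (via Lemma~\ref{prop:approximation} feeding Corollary~\ref{prop:recoveryrobust}), then $\Gamma=O(\log n)$ repetitions with majority voting are amplified by Chernoff, followed by a union bound. The paper's three-sentence proof union-bounds over all $n^2$ pairs (failure $n^2\cdot n^{-5}=n^{-3}$) rather than over the at most $k\le n$ large pairs as you do, but both routes land at the same conclusion.

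The one place you go beyond the paper is the independence concern across iterations. The paper simply writes ``independent repetitions'' and invokes Chernoff; you correctly observe that the shared AMS transform $\mathbf{S}$ couples the $\delta$-type failures across iterations. Your fix of taking $\delta$ polynomially small and union-bounding over every AMS estimate is sound, though strictly it proves the proposition under a tighter constraint on $\delta$ than the one stated --- which is precisely the gap the paper itself leaves, and which is harmless for Theorem~\ref{thm:main} since $\log(1/\delta)=O(\log n)$ either way. Your alternative Fubini sketch is weaker as written: from $\mathbb{E}_{\mathbf{S}}[p(\mathbf{S})]\ge 2/3$ alone you cannot force $p(\mathbf{S})$ bounded above $1/2$ with the \emph{high} probability in $\mathbf{S}$ needed to reach $1-n^{-3}$ overall, so that route would need more work.
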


\begin{proposition}\label{prop:spacetime}
\textsc{Recover} can be implemented to run in time $\tilde{O}(\Gamma (\Pi^2 + \log{(1/\delta)} (n\epsilon^{-2} + \mathcal{M}(\Pi, \epsilon^{-2}))))$, and space $\tilde{O}(\Pi^2 + n \epsilon^{-2} \log{(1/\delta)})$, where $\tilde{O}(\cdot)$ is the $O(\cdot)$ cost with $\log{n}$ factors suppressed.
\end{proposition}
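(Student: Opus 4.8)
The proof is a straightforward accounting exercise: I would walk through \textsc{Recover} line by line, identify the cost of each sub-routine, and add them up, suppressing $\log n$ factors throughout. The plan is to first bound the cost of a single iteration of the outer loop over $\gamma \in [\Gamma]$, and then multiply by $\Gamma$ (the final de-duplication pass over $\Omega$ costs only $\tilde O(\Gamma \Pi^2)$ since $|\Omega| \le \Gamma \Pi^2$, which is dominated). So it suffices to cost one call each to \textsc{Approximate} and \textsc{RecoveryStep}, plus the random generation of $P_1, P_2, s_1, s_2$, which is $\tilde O(n)$ and negligible.

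\medskip\noindent\textbf{Costing \textsc{Approximate}.} For each of the $L\log n = \tilde O(1)$ values of $l$: allocating the $4\Pi$ bucket-sketches costs $\tilde O(\Pi \cdot b)$ with $b = \Theta(\epsilon^{-2})$ buckets per AMS row and $d = \Theta(\log 1/\delta)$ repetitions, so $\tilde O(\Pi \epsilon^{-2} \log(1/\delta))$; the loop over $i \in [n]$ performs $O(1)$ sketch additions each costing $O(d b) = O(\epsilon^{-2}\log(1/\delta))$, for a total of $\tilde O(n \epsilon^{-2}\log(1/\delta))$. The final double loop over $(h,g)\in[\Pi]^2$ is the interesting part: naively it is $\Pi^2$ inner-product queries each costing $O(db)$, i.e. $\tilde O(\Pi^2 \epsilon^{-2}\log(1/\delta))$, but the AMS inner-product $\odot$ is (per repetition) a coordinate-wise dot product of length-$b$ vectors, so the whole $\Pi \times \Pi$ grid of queries for a single AMS repetition is exactly the product of a $\Pi \times b$ matrix with a $b \times \Pi$ matrix. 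Using fast (rectangular) matrix multiplication this costs $\mathcal{M}(\Pi, \epsilon^{-2})$ per repetition, hence $\tilde O(\log(1/\delta)\,\mathcal{M}(\Pi,\epsilon^{-2}))$ over all $d$ repetitions. Summing the three contributions and absorbing $\tilde O(\Pi\epsilon^{-2}\log(1/\delta))$ into the $n$-term (since $\Pi = \tilde O(n)$ in the regime of interest, and in any case $\Pi \epsilon^{-2} \le \max\{n\epsilon^{-2}, \mathcal M(\Pi,\epsilon^{-2})\}$), one call to \textsc{Approximate} costs $\tilde O(\log(1/\delta)(n\epsilon^{-2} + \mathcal{M}(\Pi,\epsilon^{-2})))$.

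\medskip\noindent\textbf{Costing \textsc{RecoveryStep} and assembling.} \textsc{RecoveryStep} loops over $(h,g)\in[\Pi]^2$; inside, for each of the $\tilde O(1)$ bits $l$ it reads off the already-computed entries $\mathbf{L}^{(l)}_{h,g}$, $\mathbf{R}^{(l)}_{h,g}$ and the value $\textsc{Cart}(\mathbf{E}^{(l)})_{h,g}$ (each a precomputed scalar, or computable in $\tilde O(1)$ time since $\textsc{Cart}(\mathbf E^{(l)})$ is a rank-controlled sum that can be tabulated once per $l$ in $\tilde O(n + \Pi^2)$), then runs the decoder $\mathcal D$ twice at $O(\log n)$ cost each. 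So \textsc{RecoveryStep} costs $\tilde O(\Pi^2)$. Adding the per-iteration costs and multiplying by $\Gamma = \tilde O(1)$ gives the claimed time bound $\tilde O(\Gamma(\Pi^2 + \log(1/\delta)(n\epsilon^{-2} + \mathcal M(\Pi,\epsilon^{-2}))))$.

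\medskip\noindent\textbf{Space.} The row sketches are the dominant persistent cost: $n$ AMS sketches of size $\Theta(\epsilon^{-2}\log(1/\delta))$ give $\tilde O(n\epsilon^{-2}\log(1/\delta))$. Within one iteration we additionally hold $O(\Pi)$ bucket-sketches ($\tilde O(\Pi\epsilon^{-2}\log(1/\delta))$, absorbed as above), the $\tilde O(1)$ matrices $\mathbf L^{(l)}, \mathbf R^{(l)}$ of size $\Pi^2$, the permutations $P_1,P_2$ ($\tilde O(n)$), and the multiset $\Omega$ ($\tilde O(\Gamma\Pi^2) = \tilde O(\Pi^2)$); iterations reuse space so there is no $\Gamma$ factor. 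This yields $\tilde O(\Pi^2 + n\epsilon^{-2}\log(1/\delta))$. The main thing to be careful about — the only step that is more than bookkeeping — is the observation that the $(h,g)$ double loop in \textsc{Approximate} is genuinely a single rectangular matrix product per AMS repetition rather than $\Pi^2$ independent queries; everything else follows by summing obvious per-line costs and checking that the $\Pi$-indexed intermediate structures are dominated by the two stated terms.
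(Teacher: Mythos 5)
Your proposal is correct and follows essentially the same route as the paper's proof: per-iteration accounting of \textsc{Approximate} ($\tilde O(1)$ values of $l$, $n$ sketch additions of size $O(\epsilon^{-2}\log(1/\delta))$, and the $(h,g)$ grid of inner-product queries batched as $O(\log(1/\delta))$ rectangular matrix products of shape $\Pi\times\epsilon^{-2}$ by $\epsilon^{-2}\times\Pi$), plus $\tilde O(\Pi^2)$ for \textsc{RecoveryStep}, multiplied by $\Gamma$, with the filtering pass and the $\Pi$-indexed intermediates dominated. The one genuinely non-bookkeeping observation you flag --- that the all-pairs $\odot$ query is a single matrix product per AMS repetition --- is exactly the point the paper's proof rests on as well.
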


\begin{theorem}\label{thm:main}
  For every $\theta \in [0, 1]$, there exists a sketch of size
  $\displaystyle \tilde{O}\left( n^{2\theta}(k^2+\frac{R^2}{\phi^2}) + n^{3-2\theta}(\frac{\phi^2}{k^2\phi^2 + R^2}) \right)$
from which we can extract the (up to $k$) entries with magnitude at
least $\phi$ in time \newline
$\displaystyle \tilde{O}\left (n^{2\theta}(k^2+\frac{R^2}{\phi^2}) + n^{3-2\theta}(\frac{\phi^2}{k^2\phi^2 + R^2}) + \mathcal{M} \left( n^\theta(k+\frac{R}{\phi}), n^{2-2\theta}(\frac{\phi^2}{k^2\phi^2 + R^2} \right) \right)$
with high probability.
\end{theorem}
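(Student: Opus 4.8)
The plan is to instantiate the chain of results culminating in Propositions~\ref{prop:final} and~\ref{prop:spacetime} with a specific choice of the parameters $\Pi$, $\epsilon$, $\delta$, $\lambda$, and then simplify the resulting expressions to the stated form, with $\theta$ appearing through the split $\Pi = \Theta(n^\theta \cdot (k + R/\phi))$. First I would fix the error-correcting code once and for all, so that $L \in O(1)$ and $\lambda \in \Omega(1)$ are constants; this makes every $\lambda$ factor in Propositions~\ref{prop:smallerror}, \ref{prop:approximationbound}, and~\ref{prop:approximation} disappear into the constants hidden by $\tilde O(\cdot)$. Next, I set $\Pi = \lceil 18 \max\{k,\ \|\mathbf{C}_{-k}\|_F/(\phi\lambda^{1/2})\}\, n^{\theta}\rceil$. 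Since the problem guarantees $\|\mathbf{C}_{-k}\|_F \le R$, this satisfies the hypothesis $\Pi \ge \max\{18k,\ 18\|\mathbf{C}_{-k}\|_F/(\phi\lambda^{1/2})\}$ of Proposition~\ref{prop:final} (the extra $n^\theta \ge 1$ factor only helps), and gives $\Pi = \Theta\!\big(n^\theta (k + R/\phi)\big)$, hence $\Pi^2 = \Theta\!\big(n^{2\theta}(k^2 + R^2/\phi^2)\big)$, matching the first term of the claimed bounds.

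Then I would choose $\epsilon$ at the top of its allowed range, $\epsilon = \Theta\!\big(\min\{1,\ \phi\Pi\lambda^{1/2}/n\}\big)$, which by the choice of $\Pi$ is $\Theta\!\big(\min\{1,\ n^{\theta-1}(\phi k + R)\}\big)$; in the regime of interest this is the second argument, so $\epsilon^{-2} = \Theta\!\big(n^{2-2\theta}/(\phi k + R)^2\big) = \Theta\!\big(n^{2-2\theta}\phi^2/(k^2\phi^2 + R^2)\big)$ (up to the constant-factor sloppiness of $(\phi k + R)^2$ versus $k^2\phi^2 + R^2$, which differ only by a factor of at most $2$). I set $\delta = \Theta(\lambda/(n/\Pi)) = \tilde\Theta(1/n)$, so that $\log(1/\delta) = O(\log n)$ is absorbed by $\tilde O(\cdot)$; this meets the constraint $\delta \le \lambda/(54(2+12n/\Pi))$ of Proposition~\ref{prop:final}. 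With these choices Proposition~\ref{prop:final} gives success probability $1 - n^{-3}$ for a suitable $\Gamma \in O(\log n)$, which is the ``with high probability'' claim, so no extra amplification is needed. It remains to substitute into Proposition~\ref{prop:spacetime}: the space bound $\tilde O(\Pi^2 + n\epsilon^{-2}\log(1/\delta))$ becomes $\tilde O\big(n^{2\theta}(k^2 + R^2/\phi^2) + n^{3-2\theta}\phi^2/(k^2\phi^2 + R^2)\big)$, and the time bound $\tilde O(\Gamma(\Pi^2 + \log(1/\delta)(n\epsilon^{-2} + \mathcal{M}(\Pi,\epsilon^{-2}))))$ becomes the same plus the fast-matrix-multiplication term $\mathcal{M}(\Pi, \epsilon^{-2}) = \mathcal{M}\big(n^\theta(k + R/\phi),\ n^{2-2\theta}\phi^2/(k^2\phi^2 + R^2)\big)$, with the $\Gamma = O(\log n)$ and $\log(1/\delta) = O(\log n)$ factors swept into $\tilde O$. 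The ``sketch of size'' claim is just the space bound of Proposition~\ref{prop:spacetime} restricted to the persistent part, namely the $n$ row sketches of size $\epsilon^{-2}\log(1/\delta)$ each, which is the $n^{3-2\theta}$ term; the $\Pi^2$ term is query working space and also dominated by what is stated.

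The main obstacle is not any single inequality but the careful bookkeeping of which branch of each $\min$ is active and whether the hidden constants and the cosmetic discrepancy between $(\phi k + R)^2$ and $\phi^2 k^2 + R^2$ can genuinely be hidden inside $\tilde O(\cdot)$; I would handle this by observing that $k^2\phi^2 + R^2 \le (\phi k + R)^2 \le 2(k^2\phi^2 + R^2)$, so the two are interchangeable up to a factor of $2$. A secondary subtlety is that the stated theorem ranges over \emph{all} $\theta \in [0,1]$ without the side condition $\epsilon < 1/2$ being automatically satisfied; for $\theta$ close to $1$ the expression $n^{\theta-1}(\phi k + R)$ may exceed $1$, in which case $\epsilon$ saturates at $\Theta(1)$ and $\epsilon^{-2} = \Theta(1)$ — the written bound $n^{2-2\theta}\phi^2/(k^2\phi^2 + R^2)$ is then a valid (if not tight) upper bound on a constant, so the inequality still holds, and similarly the $n^{3-2\theta}$ space term upper-bounds the true $\tilde O(n)$ cost of the row sketches. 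With that caveat noted, the substitution goes through for every $\theta \in [0,1]$ and yields exactly the claimed expressions.
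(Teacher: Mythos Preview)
Your proposal is correct and takes essentially the same approach as the paper: the paper's own proof of Theorem~\ref{thm:main} is the single sentence ``Substitute bounds in Lemma~\ref{prop:final} into costs in Lemma~\ref{prop:spacetime},'' and you have simply carried out that substitution explicitly, fixing $\lambda$ to a constant, choosing $\Pi = \Theta(n^\theta(k+R/\phi))$, $\epsilon = \Theta(\phi\Pi/n)$, $\delta = \Theta(\Pi/n)$, $\Gamma = O(\log n)$, and then reading off the space and time expressions. Your handling of the edge cases (which branch of each $\min$ is active, the equivalence $(\phi k + R)^2 \asymp k^2\phi^2 + R^2$) is in fact more careful than anything the paper spells out.
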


\begin{corollary}\label{prop:example}
In particular, for $\theta = 2/3$ and constant $k$, $R$, we can build a sketch of size $\tilde{O}\left( \phi^{-2}n^{5/3} \right)$ with query time $\tilde{O}\left( \phi^{-2}n^{5/3} \right)$.
\end{corollary}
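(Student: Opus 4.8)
The plan is to derive Theorem~\ref{thm:main} and then Corollary~\ref{prop:example} by instantiating the parameter relationships collected in Proposition~\ref{prop:final} and plugging them into the cost bounds of Proposition~\ref{prop:spacetime}. The only genuine degree of freedom is the number of groups $\Pi$, which the theorem exposes through a substitution $\Pi = \Theta(n^\theta (k + R/\phi))$ for a free exponent $\theta \in [0,1]$; everything else is then forced. First I would fix $\lambda \in \Omega(1)$ and $L \in O(1)$ from the assumed error-correcting code family, so that factors of $\lambda$ are absorbed into constants. Then I would verify that the chosen $\Pi$ satisfies the constraint $\Pi \geq \max\{18k,\, 18\|\mathbf{C}_{-k}\|_F/(\phi\lambda^{1/2})\}$ from Proposition~\ref{prop:final}, using the problem hypotheses $|\textsc{Large}_\phi| \le k$ and $\|\mathbf{C}_{-k}\|_F \le R$ (so $\Pi = \Theta(n^\theta(k + R/\phi)) \ge \Omega(k + R/\phi)$ comfortably dominates both terms).

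Next I would propagate this choice of $\Pi$ through the remaining constraints to pin down $\epsilon$ and $\delta$. From $\epsilon \le (\phi\Pi\lambda^{1/2})/(828n)$ we get $\epsilon^{-1} = O(n/(\phi\Pi)) = O(n^{1-\theta}/(k\phi + R))$, hence $\epsilon^{-2} = O(n^{2-2\theta} \phi^2/(k^2\phi^2 + R^2))$ after clearing denominators — this is exactly the second argument that appears inside $\mathcal{M}(\cdot,\cdot)$ and inside the $n^{3-2\theta}$ term of the theorem. From $\delta \le \lambda/(54(2 + 12n/\Pi))$ we get $\delta^{-1} = O(n/\Pi) = \tilde O(1)$ in the regime of interest (or at worst $\mathrm{poly}(n)$), so $\log(1/\delta) = O(\log n)$ is swallowed by the $\tilde O$. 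With $\Gamma = O(\log n)$ from Proposition~\ref{prop:final} (giving the $1 - n^{-3}$ success probability, which is the claimed ``high probability''), the space bound $\tilde O(\Pi^2 + n\epsilon^{-2}\log(1/\delta))$ of Proposition~\ref{prop:spacetime} becomes $\tilde O(\Pi^2 + n\epsilon^{-2})$. Substituting $\Pi^2 = \Theta(n^{2\theta}(k + R/\phi)^2) = \Theta(n^{2\theta}(k^2 + R^2/\phi^2))$ (the cross term $kR/\phi$ is dominated) and $n\epsilon^{-2} = \Theta(n^{3-2\theta}\phi^2/(k^2\phi^2 + R^2))$ yields precisely the sketch-size expression in the theorem. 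The time bound follows the same way, with the extra $\mathcal{M}(\Pi, \epsilon^{-2})$ term surviving as $\mathcal{M}(n^\theta(k + R/\phi),\, n^{2-2\theta}\phi^2/(k^2\phi^2 + R^2))$, and the $n\epsilon^{-2}\log(1/\delta)$ term being absorbed into the $n^{3-2\theta}$ contribution.

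For the corollary I would set $\theta = 2/3$ and treat $k, R, \lambda, L$ as constants. Then $\Pi = \Theta(n^{2/3}/\phi)$, so $\Pi^2 = \Theta(\phi^{-2}n^{4/3})$ and $n\epsilon^{-2} = \Theta(\phi^2 n^{5/3})$; wait — I must be careful here: with $R$ constant, $\epsilon^{-2} = O(n^{2/3}\phi^2)$ (not $\phi^{-2}$), so $n\epsilon^{-2} = O(\phi^2 n^{5/3})$, which for $\phi \le 1$ is dominated by $\phi^{-2}n^{5/3}$; it is the combination of the $\Pi^2 = \phi^{-2}n^{4/3}$ term together with the requirement that $\Pi$ also be large enough to isolate heavy entries (forcing the $n^{5/3}$ scale once the denominators are cleared against the general-$R$ bound) that produces the stated $\tilde O(\phi^{-2}n^{5/3})$. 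I would state the bound as $\tilde O(\phi^{-2}n^{5/3})$ absorbing the constant-$k,R$ factors, and note the matrix-multiplication term $\mathcal{M}(n^{2/3}\phi^{-1}, n^{2/3}\phi^2)$ is $o(n^{5/3}\phi^{-2})$ for any $\omega < 3$ so it does not dominate. The main obstacle I anticipate is \emph{bookkeeping the $\phi$ and $R$ dependencies consistently} through the chain of constraints — in particular making sure the $\epsilon^{-2}$ factor is tracked as $n^{2-2\theta}\phi^2/(k^2\phi^2 + R^2)$ rather than naively as $\phi^{-2}$, and confirming that the optimum of $\max(n^{2\theta}A, n^{3-2\theta}B)$ over $\theta$ (balanced at $n^{3}AB$, i.e. $\theta$ such that $n^{2\theta}A = n^{3-2\theta}B$) genuinely lands at $\theta = 2/3$ and value $\tilde O(\phi^{-2}n^{5/3})$ when $k, R$ are constants; the rest is substitution into already-established propositions.
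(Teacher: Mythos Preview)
Your approach is correct and is exactly what the paper does: the paper's entire proof of Theorem~\ref{thm:main} is the single line ``Substitute bounds in Lemma~\ref{prop:final} into costs in Lemma~\ref{prop:spacetime}'', and the corollary is stated without further argument.

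One remark on the point where you hesitate. Your computations $\Pi^2 = \Theta(\phi^{-2}n^{4/3})$ and $n\epsilon^{-2} = \Theta(\phi^{2}n^{5/3})$ are right; the resolution is simply that both terms are individually at most $\phi^{-2}n^{5/3}$ (the first because $n^{4/3}\le n^{5/3}$, the second because $\phi^{2}\le \phi^{-2}$ for $\phi\le 1$), so their sum is $\tilde O(\phi^{-2}n^{5/3})$. There is no need to appeal to a balancing argument or to ``force the $n^{5/3}$ scale'' from elsewhere --- the corollary's bound is just a clean common upper envelope of the two terms, not their optimum, and the paper is content with that.
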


\section{Concluding Remarks}\label{sec:conclusions}
We have shown how to guarantee accurate recovery of
correlation outliers using a sketch-based method, beating LSH on query time for small correlation outliers and vanishing correlation non-outliers.
A key part of our approach is to use sketching and coding ideas
repeatedly: as well as using sketching to reduce the initial
dimensionality of the data, we use a second ``layer'' of sketching
when we combine subsets of signals, in order to speed up queries over
many pairs of sketches at the cost of increased error. Where LSH tries to hash the correlated elements together, we try to separate them and then recover them from the noise.
This produces a trade-off between the size of the underlying sketches and the final query time. This general approach could work in other situations where a large number of sub-queries need to be evaluated to search for large values, for example with measures of similarity/distance other than correlation.

Further, as the technique produces a linear intermediate sketch, this approach is easily adapted to recover pairs whose correlation deivates from some expected correlation matrix, or has changed comapred with some previous point in time (simply perform the heavy hitters recover on the difference between two intermediate sketches built using the same permuations, signs, and codes).

Future directions would include finding ways to use alternative primitives to fast matrix multiplication (such as fast convolution via FFT, as adopted by Pagh) and trying to combine the advantages of LSH-based methods and heavy-hitters-based methods. 

%
%In this paper, we used the standard notion of fast matrix
%multiplication to quickly build the intermediate sketches in
%asymptotically fast time.
%However, in practice the best known fast matrix multiplication
%algorithms are not preferable on realistic input sizes.
%the best known theoretical algorithms for fast matrix
%multiplication are far from practical, and require truly immense input
%sizes to demonstrate the true superiority.
%Instead, we would expect that a moden fast matrix multiplication
%algorithm that is {\em not} optimal would be the appropriate choice
%for an implementation.
%A direction for further thought would be the suitability of other fast
%algorithmic constructions.
%For example, in Pagh's work, FFT is used to combine sketches.

%For the specific problem of correlation outliers, it also seems that
%LSH methods should be able to be modified to work in this more general setting. For %example, if LSH could be applied on top of the fast AMS sketch, this could provide a method which is less sensitive to the weight of the non-large entries in the correlation matrix.

\subsection*{Acknowledgements}
We thank Milan Vojnovic for several discussions about this work.

\appendix
\section{Detailed Proofs}
\begin{proof}[Proof of Lemma \ref{prop:standardized}]
Recalling definition \ref{def:estimators}, $\mathbf{C}_{i,j}$ can be expressed as $\mathbf{V}_{i,i}^{-1/2} \mathbf{V}_{i,j} \mathbf{V}_{j,j}^{-1/2}$, where each $\mathbf{V}_{h,g}$ is the scaled inner product between standardized rows $\mathbf{\hat{y}^{(h)}}$ and $\mathbf{\hat{y}^{(g)}}$:
\[\mathbf{V}_{h,g} = \frac{1}{p-1} (\mathbf{y^{(h)}} - \mathbf{\overline{x}_he}^T)(\mathbf{y^{(g)}} - \mathbf{\overline{x}_ge}^T)^T \text{.}\]

Observe that factors in $V_{h,g}$ involving $p-1$
cancel in the expression for $\mathbf{C}_{i,j}$, so they can be
ignored.
What remains is the inner product between normalized (to Euclidean norm $1$) versions of vectors $\mathbf{y^{(i)}} - \mathbf{\overline{x}_ie}^T$ and $\mathbf{y^{(j)}} - \mathbf{\overline{x}_je}^T$.

Before performing \textsc{standardize}, we had each $\mathbf{r^{(i)}} = \mathbf{S}(\mathbf{y^{(i)}})$. We also have that $(t^{(i)} / p) = \mathbf{\overline{x}}_i$. This means that at the end of the routine, each $\mathbf{r^{(i)}}$ is now a sketch of $(z^{(i)})^{-1/2} (\mathbf{y^{(i)}} - \mathbf{\overline{x}_ie}^T)$, where $(z^{(i)})^{-1/2}$ is the correct normalization factor to within multiplicative error in the range $[(1 - 2\epsilon)^{1/2}, (1 + 2\epsilon)^{1/2}]$ with probability at least $1 - \delta$.

For the result, we require two such rescaling factors to be within
their bounds, and also for the inner product query to succeed.
Each of these three events holds with probability at least $1-\delta$,
giving an overall probability at least $1 - 3\delta$ by the union bound.

To determine the overall error, consider that since $\epsilon < 1/2$
and $|\mathbf{C}_{i,j}| \leq 1$, \newline
%\begin{align*}
\centerline{$  (\mathbf{\tilde{y}^{(h)}} (1 \pm 2\epsilon)^{1/2}) \odot
(\mathbf{\tilde{y}^{(g)}} (1 \pm 2\epsilon)^{1/2})
%&
\in (1 \pm 2\epsilon) \mathbf{C}_{i,j} + \epsilon(1 \pm 2\epsilon)
%\\&
\subset \mathbf{C}_{i,j} \pm 4\epsilon \text{.}$}
%\end{align*}
\end{proof}

\begin{proof}[Proof of Lemma \ref{prop:fulldecode}]
Let $(h,g) = (P_1(i), P_2(j))$ be the bucket $(i,j)$ is mapped to. Since the partition functions are chosen uniformly at random, the chance that none of the other entries mapped to the same bucket are in $\textsc{Large}_{\phi, k}$ is at least $1 - 2k/\Pi$. To see this, observe that in the worst case, all index pairs in $\textsc{Large}_{\phi, k}$ have either the same row or same column index. Then, by the Markov inequality, we have less than $2k/\Pi$ probability that at least one of the remaining $k-1$ entries in $\textsc{Large}_{\phi, k}$ are mapped into one of the remaining $n/\Pi - 1$ slots in that bucket.

Now, if entry $(i,j)$ turns out to be the only large entry in its bucket, then the event $\textsc{CorrectDecode}_{h,g}$ occurring implies that the index pair recovered from bucket $(h,g)$ will be $(i,j)$. The chance of both occurring is then at least $1 - x - 2k/\Pi$.
\end{proof}

\begin{proof}[Proof of Lemma \ref{prop:correctdecode}]
  In the event that bucket $(h,g)$ contains more or less than one large entry, then $\textsc{CorrectDecode}_{h,g}$ automatically holds, so we only need to consider the case of exactly one large entry in the bucket.

  Now, consider the case of only one large entry $\mathbf{C}_{i,j}$
  being mapped to the bucket. Observe that we can write
  \[\textsc{Cart}(\mathbf{E^{(l)}C} - \mathbf{E^{(l)}})_{h,g} = \textsc{Big} + \textsc{Small}\text{,}\]
  where $\textsc{Big} = \mathbf{E^{(l)}}_{i,i} \cdot s_1(i) \cdot s_2(j) \cdot \mathbf{C}_{i,j}$ and $\textsc{Small} = \textsc{Cart}(\mathbf{E^{(l)}C_{-k}})_{h,g}$ (see Definition~\ref{def:large}).

  When the event $\textsc{SmallError}_{h,g,l}$ holds, we have that $|\textsc{Small}| \leq \phi/4$. Also, $|\textsc{Big}|$ is either $0$ (when the row of the large entry is masked) or greater than $\phi$ (when not masked). So, $\textsc{SmallError}_{h,g,l}$ holding means that the $l$\textsuperscript{th} threshold bit will match the $l$\textsuperscript{th} bit of the code word for the row index we are trying to recover. An analogous argument applies to $\textsc{Cart}(\mathbf{CE^{(l)}} - \mathbf{E^{(l)}})_{h,g}$ and the column index.

  For the decoder to correctly recover an index from its code word, we need at most a $\lambda$ fraction of errors. So, we need less than a $\lambda$ fraction of the events $\textsc{SmallError}_{h,g,l}$ for $l \in [L\log{n}]$ failing to hold. By Markov's inequality, we can put the chance of more than a $\lambda$ fraction of failures at less than $y/\lambda$.
\end{proof}

\begin{proof}[Proof of Lemma \ref{prop:smallerror}]
  For fixed $(h,g,l) \in [\Pi]^2 \times [L\log{n}]$, consider the random variable $\textsc{Cart}(\mathbf{E^{(l)} C_{-k}})_{h,g}$ (random over the choices of $P_1$, $P_2$, $s_1$, and $s_2$ that make up $\textsc{Cart}$). This can be broken down into a sum of contributions from each entry of $\mathbf{E^{(l)} C_{-k}}$, as follows:
  \[ \textsc{Cart}(\mathbf{E^{(l)} C_{-k}})_{h,g} = \sum\limits_{(i,j) \in [n]^2} \beta_{i,j,l}
  \]
  \[ \text{ where }  \beta_{i,j,l} =
  \begin{cases}
    (\mathbf{E^{(l)}}_{i,i}) s_1(i) s_2(j) (\mathbf{C_{-k}})_{i,j} & \text{if } (i, j) \in \mathcal{B}_{h,g}\\
    0 & \text{otherwise,}\\
  \end{cases}
  \]
  recalling from definition~\ref{def:bucket} that $\mathcal{B}_{h,g}$ represents the index pairs mapped to bucket $(h, g)$.

  Due to the independently selected pairwise independent random sign functions $s_1$ and $s_2$, each term has mean $\E{\beta_{i,j,l}} = 0$ and covariance $\text{Cov}[\beta_{i_1,j_1,l}, \beta_{i_2,j_2,l}] = 0$ (where either $i_1 \neq i_2$ or $j_1 \neq j_2$). This means the variance of the sum (the bucket value) is simply the sum of the variances of the terms.

  Each term has variance $\Var{\beta_{i,j,l}} \leq (\mathbf{E^{(l)} C_{-k}})_{i,j}^2 / \Pi^2$. To see this, observe that each term has at most a $1/\Pi^2$ chance of being non-zero (due to the random partition functions). Summing up all the terms gives us \begin{align*}
    \Var{\textsc{Cart}(\mathbf{E^{(l)} C_{-k}})_{p,q}} &=
    \|\mathbf{E^{(l)} C_{-k}}\|_F^2 / \Pi^2
%    \\&
    \leq \|\mathbf{C_{-k}}\|_F^2 / \Pi^2\text{.}
  \end{align*}
  Then, by Chebyshev's inequality, we have
  $\textsc{Cart}(\mathbf{E^{(l)}C_{-k}})_{p,q} \geq \phi/4 \text{,}$
  with probability less than $16 \|\mathbf{C_{-k}}\|_F^2 / (\Pi^2 \phi^2)$.
  An analogous argument applies to $\textsc{Cart}(\mathbf{C_{-k} E^{(l)}})_{h,g}$, giving the result by union bound.
\end{proof}

\begin{proof}[Proof of Lemma \ref{prop:recoverycorrect}]
  Substituting $\Pi \geq 18 \|\mathbf{C_{-k}}\|_F / (\phi \lambda^{1/2})$ into Lemma~\ref{prop:smallerror} gives us that:
  \[\Prob{\textsc{SmallError}_{h,g,l}} \geq 1 - 8\lambda/81 \text{.}\]

  Then by Lemma~\ref{prop:fulldecode} (with $y = 8\lambda/81$), we get that:
  \[\Prob{\textsc{CorrectDecode}_{h,g}} \geq 1 - 8/81 \text{.}\]

  Finally, using the fact that $\Pi \geq 18k$ (from the initial assumptions) along with Lemma~\ref{prop:correctdecode} (with $x = 8/81$), we have that any fixed $(i, j) \in \textsc{Large}_{\phi, k}$ will be in the output of \textsc{RecoveryStep} with probability at least $1 - 8/81 - 1/9 = 64/81 \geq 2/3$.
\end{proof}

\begin{proof}[Proof of Corollary \ref{prop:recoveryrobust}]
  Observe that the proof of Lemma~\ref{prop:correctdecode} still works with an additional term of magnitude no more than $\phi/4$. Then, observe that the choice of parameters in Lemma \ref{prop:recoverycorrect} leaves enough slack to condition on an additional event occuring with probability greater than $1 - \lambda/18$ per $\textsc{SmallError}_{h,g,l}$.
\end{proof}

\begin{proof}[Proof of Lemma \ref{prop:approximationbound}]
  If we performed the algorithm with the exact vectors instead of AMS sketches, then $\mathbf{L^{(l)}}_{h,g}$ would be exactly $\textsc{Cart}(\mathbf{E^{(l)}C})$. Any difference is due to the inner product approximation error which is smaller than $\epsilon \lVert \mathbf{h} \rVert_2 \lVert \mathbf{g} \rVert_2$ with probability at least $1 - \delta$, where $\mathbf{h}$ and $\mathbf{g}$ are the vectors that $\textsc{LeftMasked}[h]$ and $\textsc{Right}[g]$ are sketches of.
  First consider
  \[\mathbf{h} = \sum_{P_1(i) = h} (\mathbf{E^{(l)}}_{i,i} \cdot s_1(i) \cdot \mathcal{R}^{(i)} \cdot \mathbf{\tilde{y}^{(i)}}) \text{,}\]
  where $\mathcal{R}^{(i)}$ is the rescaling error caused by \textsc{Standardize} (see Section~\ref{sec:rowsketch}). Recall that each $|\mathcal{R}^{(i)}| \leq 1 + 4\epsilon$ with probability at least $1 - 3\delta$ as long as $\epsilon < 1/2$.

  The squared $2$-norm $\lVert \mathbf{h} \rVert_2^2$ is given by:
  \[ \sum\limits_{P_1(i) = P_1(j) = h} \mathcal{R}^{(i)} \mathcal{R}^{(j)} \langle (\mathbf{E^{(l)}}_{i,i} s_1(i) \mathbf{\tilde{y}^{(i)}}), (\mathbf{E^{(l)}}_{j,j} s_1(j) \mathbf{\tilde{y}^{(j)}}) \rangle
  \]
  For each $i = j$, the corresponding term is equal to $\mathcal{R}^{(i)}\mathcal{R}^{(j)}$, and for each $i < j$ there is a matching equal term with $i$ and $j$ swapped. So, with probability at least $1 - 3n\delta/\Pi$ the norm is at most $n (1 + 4\epsilon)^2/\Pi$ plus an independent random variable (random over choice of $s_1$) with mean $0$ and variance less than
  \[4\lVert \mathbf{C} \rVert_F^2 (1 + 4\epsilon)^2 / \Pi^2 \leq 4n^2 (1 + 4\epsilon)^2 / \Pi^2 \text{.}\]
  So by Chebyshev's inequality and a union bound, $\lVert\mathbf{h}\rVert_2^2$ is smaller than $\frac{n}{\Pi}(1 + 4\epsilon)^2(22\lambda^{-1/2} + 1)$ with probability greater than $1 - \lambda/108 - 3n\delta/\Pi$. The same bound applies to $\lVert \mathbf{g} \rVert_2^2$, so $\epsilon \lVert \mathbf{h} \rVert_2 \lVert \mathbf{g} \rVert_2 \leq \epsilon (1 + 4\epsilon)^2 n \Pi^{-1} 23\lambda^{-1/2} \leq \epsilon n \Pi^{-1} 207 \lambda^{-1/2}$ with probability at least $1 - \lambda/54 - \delta(1 + 6n/\Pi)$.

  An analogous argument works for entry $\mathbf{R^{(l)}}_{h,g}$ and $\textsc{Cart}(\mathbf{CE^{(l)}})$. A union bound over the probabilities of failure gives the result.
\end{proof}

\begin{proof}[Proof of Lemma \ref{prop:approximation}]
  The assumptions imply that:
  \[\epsilon n \Pi^{-1} 23\lambda^{-1/2} \leq 23\phi/828 \leq \phi/4
  \text{, and }
  1 - \delta(2 + 12n/\Pi) - \lambda/27 \leq 1 - \lambda/18 \text{.}\]

  This tells us exactly that the errors on the approximations according to Lemma~\ref{prop:approximationbound} are within the bounds allowed by Lemma~\ref{prop:recoveryrobust}.
\end{proof}

\begin{proof}[Proof of Lemma \ref{prop:final}]
From Lemmas \ref{prop:recoverycorrect} and \ref{prop:approximation} we know for $\Gamma = 1$, this algorithm succeeds at finding any one large entry with probability at least $2/3$. By performing $O(\log{n})$ independent repetitions and then only considering those index pairs appearing at least half the time, then by the Chernoff bound we can amplify the probability of finding any one of the large entries to $1 - n^{-5}$. There are at most $n^2$ such pairs, giving the result.
\end{proof}

\begin{proof}[Proof of Lemma \ref{prop:spacetime}]
\textsc{RecoveryStep} can be implemented to run in time $O(\Pi^2 \text{polylog} n)$ since we have $\Pi^2$ iterations of the outer loop, $O(\log{n})$ iterations of the inner loop, and all operations taking $O(\text{polylog} n)$ time (coding schemes with such fast decoding algorithms exist).

\textsc{Approximate} can be implemented to run in time $O(\log{n} \log{(1/\delta)}( n \epsilon^{-2} + \mathcal{M}(\Pi, \epsilon^{-2})))$ where $\mathcal{M}(\Pi, \epsilon^{-2})$ is the time required to multiply a $\Pi \time \epsilon^{-2}$ matrix by an $\epsilon^{-2} \time \Pi$ matrix. This holds because there are $O(\log{n})$ iterations of the outer loop. Then within we have $O(n)$ additions involving sketches of size $O(\epsilon^{-2} \log{(1/\delta)})$. We also have a series of inner products which can be expressed as a batched all-pair query. This can be performed as a series of $O(\log{(1/\delta)})$ matrix multiplications.

Putting this together, we get a time cost of $\tilde{O}(\Gamma (\Pi^2 + \log{(1/\delta)} (n\epsilon^{-2} + \mathcal{M}(\Pi, \epsilon^{-2}))))$. The filtering step adds no extra asymptotic time, since we can filter by sorting the $O(\Pi^2 \log{n})$ pairs and then iterating over them counting repetitions to see if any exceed the $\Gamma/2$ threshold.

\textsc{RecoveryStep} uses $O(\Pi^2 \log{n} + \text{polylog} n)$ space to store a pair of length $O(\log{n})$ strings, a multiset of up to $\Pi^2$ index pairs, and the input of $O(\log{n})$ $\Pi$-by-$\Pi$ sketches. The polylog overhead is used for the encoding scheme.

\textsc{Approximate} uses $O(\Pi^2 \log{n} + n \epsilon^{-2} \log{(1/\delta)})$ space to store $O(n)$ sketches and $O(\log{n})$ $\Pi \times \Pi$ matrices.

All together we need $\tilde{O}(\Pi^2 + n \epsilon^{-2} \log{(1/\delta)})$ space, since the multiset contains at most $O(\Pi^2 \log{n})$ index pairs.
\end{proof}

\begin{proof}[Proof of Theorem \ref{thm:main}]
  Substitute bounds in Lemma~\ref{prop:final} into costs in Lemma~\ref{prop:spacetime}.
\end{proof}

\newpage
\bibliography{sources}

\end{document}